\newcommand{\etal}{\textit{et al}.\,}
\begin{document}

\newtheorem{theorem}{Theorem}
\newtheorem{proposition}{Proposition}
\newtheorem{lemma}{Lemma}
\newtheorem{definition}{Definition}
\newtheorem{proof}{Proof}


%
\title{Joint Time and Power Allocation for 5G NR Unlicensed Systems}

\author{Haizhou Bao, Yiming Huo,~\IEEEmembership{Member,~IEEE,} 
Xiaodai Dong,~\IEEEmembership{Senior~Member,~IEEE,}  
and Chuanhe Huang
\thanks{Y. Huo and X. Dong are with the Department of Electrical and Computer
 Engineering, University of Victoria, Victoria, BC V8P 5C2, Canada (e-mail:
ymhuo@uvic.ca, xdong@ece.uvic.ca). H. Bao and C. Huang are with Wuhan University, Wuhan, China  (e-mail: baohzwhu@whu.edu.cn, huangch@whu.edu.cn).
 H. Bao is also a visiting scholar with the Department of Electrical and Computer
 Engineering, University of Victoria, Canada (\emph{Corresponding authors: Xiaodai Dong, Chuanhe Huang}).}}

\maketitle

\begin{abstract}
The fifth-generation (5G) and beyond networks are designed to efficiently utilize the spectrum resources to meet various quality of service (QoS) requirements. The unlicensed frequency bands used by WiFi are mainly deployed for indoor applications and are not always fully occupied. The cellular industry has been working to enable cellular and WiFi coexistence. In particular, 5G New Radio in unlicensed channel spectrum (NR-U) supports the uplink and downlink transmission on the maximum channel occupation time (MCOT) duration. In this paper, we consider maximizing the total throughput of both downlink and uplink in NR-U by jointly optimizing the time and power allocation during MCOT while ensuring fair coexistence with WiFi. Fairness is guaranteed in two steps: 1) tuning the access related parameters of NR-U to achieve proportional fairness, and 2) including 3GPP fairness from the throughput perspective as a constraint in NR-U throughput maximization. 
Numerical analysis and simulation 
have demonstrated the superior performance of the proposed resource allocation algorithm compared to conventional deployment strategies.
\end{abstract}

\begin{IEEEkeywords}
5G NR-U, spectrum sharing, resource management, cellular, WiFi, iterative algorithm.
\end{IEEEkeywords}

%
\IEEEpeerreviewmaketitle

\section{Introduction}

\IEEEPARstart{T}he fifth-generation (5G) networks are being fast deployed all over the world with many underlying technologies as critical integral parts. The exponentially growing mobile data traffic and new applications have tremendously pushed the use of the spectrum resource to the limit and hence many new frequency bands, e.g., millimeter wave (mmWave) bands are being adopted for both cellular and WiFi usage \cite{huo/20175g}. Meanwhile, spectrum sharing is another promising approach to address the high demands of data traffic. The 3GPP has been actively seeking to use the unlicensed spectrum since the LTE age. In particular, licensed assisted access (LAA) and LTE-unlicensed (LTE-U) are two protocols proposed to co-exist with WiFi in unlicensed bands. The LTE-U developed in 3GPP Releases 10/11/12 allocates a fraction of a duty cycle for the LTE system and another portion for the WiFi system, which enables the base stations (eNodeB) to offload part of their traffic to the WiFi network \cite{hamidouche/2019contract}. However, there is no carrier sensing before LTE-U transmission, which will degrade the performance of the WiFi system \cite{shoaei2017/efficient}.  On the other hand, the contention-based LAA performs carrier sensing before any transmission.  In \cite{huo/2019enabling}, the authors proposed a cellular-WiFi co-enabled design at the user equipment (UE) end to solve the hardware resource competition issue between the two standards in \cite{shoaei2017/efficient}. 
In 5G new radio, the coexistence between NR-unlicensed (NR-U) and WiFi has become a potential technology to boost the throughput of the NR system and improve the quality of service (QoS).

Nevertheless, how to accommodate cellular networks to operate in an unlicensed spectrum and ensure a fair and harmonious coexistence with other unlicensed systems is a challenging problem.  According to the NR-U fairness defined by 3GPP TR 38.899  sub 7 GHz \cite{3GPP/38.889}, assuming two independent networks are deployed in the same area (e.g., NR-U+WiFi and WiFi+WiFi),  the fairness criterion is defined as the NR-U network not degrading the WiFi 802.11n network performance when they are deployed in the area, compared to the case where two WiFi 802.11n networks are deployed, similar to the definition in 3GPP 36.889 \cite{3GPP/36.889}.  
3GPP TR 36.889 \cite{3GPP/36.889}  has provided a paradigm for fairly evaluating the coexistence between two radio access technologies. That is, evaluate two WiFi systems coexisting in a given scenario and then replace one WiFi with LAA for a group of eNBs and UE. Accordingly, there are two types of fairness evaluation methods in the literature. The first type strictly follows the evaluation method  \cite{3GPP/36.889,fang/achieving,gao/unlicensedconf,lagen/2019new,access/PatricielloLBG20}. 
For example,
Gao \etal in  \cite{fang/achieving,gao/unlicensedconf} investigate the fair coexistence between WiFi-WiFi systems, then use eNodeB to replace one WiFi AP, and derive the optimal duty circle and initial contention window size to satisfy the fairness requirement, respectively. 
The second type firstly considers a WiFi-LTE coexistence scenario, uses an equivalent WiFi system to replace the LTE system, and then adjusts some parameters to satisfy the fairness requirement \cite{twc/SunD20}. 
Furthermore, a virtual WiFi network is created to replace the LTE system and compete with the real WiFi network, where the virtual WiFi network is assumed to obtain the same level of throughput as the LTE system to imitate the impact of eNodeB on a WiFi network in \cite{wang/2018optimal,wang/TVT19}.

Although the aforementioned fairness evaluation scenarios are defined, the fairness metrics are still up for discussion. Normally airtime and throughput are the two metrics for fairness consideration. 
The proportional fairness in LTE-WiFi coexistence in \cite{mobiwac/KeyhanianLLM18,gao2020/achieving,tccn/MehrnoushRSG18} considers the equal airtime and equal throughput per node. However, airtime fairness cannot always guarantee the throughput fairness, as there is a trade-off between throughput and airtime fairness when only adjusting the contention window size \cite{cogsima/Tuladhar0V18}.  The throughput fairness in \cite{fang/achieving,gao/unlicensedconf,twc/SunD20} only considers the successful airtime to satisfy the 3GPP fairness constraint given the physical data rate. However, the physical data rate of the licensed system usually is obtained by adjusting its transmission parameters. 3GPP TR 38.899 \cite{3GPP/38.889} recommends that Category-4 listen-before-talk (LBT) should be adopted for 5G new radio base station (gNB) to access the unlicensed channels.  The authors in \cite{gao2020/achieving} investigated different LBT categories proposed by LTE Release 13 and showed that the proposed LAA LBT cannot always make the WiFi system and LTE system proportionally coexist. Additional operation needs to be taken, such as adjusting the initial contention window size or sending duration of the LTE system. This suggests that parameters in 5G NR-U will also need to be optimized. On the other hand, for the NR-U frame, there are two operation modes of the MCOT, i.e., MCOT with a single or multiple downlink (DL)/uplink (UL) switching point(s) \cite{lagen/2019new}. It was recommended in \cite{3GPP/NR-U-COT}  that the maximum number of DL/UL switching points within one MCOT initiated by gNB should be one to reduce the communication latency, which is different from the LTE-U frame where the MCOT is usually used for downlink transmission or uplink transmission. NR-U adopts the frame structure Type 3, similar to the LTE time division duplex (TDD). How to allocate the time slots in gNB initiated MCOT for uplink and downlink to enable fair coexistence with WiFi and maximize the total throughput of the NR system on the unlicensed channels has not been studied in the literature. 

In this paper, we consider the 5G NR-U coexistence with WiFi. Based on the access procedure of the NR-U and WiFi system, we calculate the throughput for the WiFi systems and the uplink and downlink throughput for NR users. To maximize the total throughput on the unlicensed channels and satisfy the fairness constraint between the two systems, we need to allocate the time and power for the uplink and downlink transmission. The main contributions of this paper are summarized as follows:

\begin{itemize}
\item  We propose a new analytical model for NR-U and WiFi fair coexistence, which involves the NR-U frame structure including DL/UL transmission in COT and access procedures of NR-U and WiFi. Fairness is taken into account in two steps. First, to guarantee the proportional fairness of the two systems in terms of airtime, we derive the optimal initial contention window size for NR-U when it adopts LBT to compete for the unlicensed band. Second, throughput fairness is included as constraints in the subsequent NR-U resource allocation optimization.
\item  To maximize the DL and UL throughput of cellular users on the unlicensed channels while satisfy the throughput fairness constraints, the time and power allocation problem of NR-U is formulated and converted into a convex problem. A low-complexity iterative algorithm is developed to solve this problem.   
\end{itemize}

The reaming part of this paper is organized as follows. In Section \ref{sec:related_work}, we survey the coexistence techniques for the NR and WiFi systems. In Section \ref{sec:sys_model}, we model the unlicensed and licensed access probability, the throughput for both systems, the fairness and power constraints, and the uplink and downlink time duration constraints. In Section \ref{sec:pro_formulation}, the throughput maximization of the NR-U gNB is formulated,
 and then converted into a convex problem. In Section \ref{sec:pro_decompose}, we use the Lagrange multiplier to relax the problem and then decompose it into two subproblems, which are solved by the Karush-Kuhn-Tucker (KKT) conditions. In Section \ref{sec:simulation}, simulation is conducted to verify the proposed model and algorithm.  Finally, Section \ref{sec:conclusion} concludes the paper and presents possible future work. 

\section{Related Works}
\label{sec:related_work}
\subsection{Resource Allocation for Unlicensed Spectrum}

LTE coexistence with the WiFi system has been widely researched.  Liu \etal in \cite{liu2019/OpenJournal} investigated comprehensive resource management scenarios in LTE-U systems, which includes single small base stations (SBS), multiple SBSs, device-to-device (D2D) network, vehicular ad hoc network, and unmanned aerial vehicle (UAV) systems. Liu \etal in \cite{LIU/2018ICC} researched user association and resource allocation in unlicensed channels. The unlicensed time slots shared by the WiFi access point (AP) are assumed to be equal to the LTE-U users' to guarantee fairness, which provides airtime fairness for two systems. 
For the 5G NR-U system,
Shi \etal \cite{access/ShiCNF20} investigated the  unlicensed spectrum resource sharing between NR-U and WiFi system, and proposed a distributed channel access mechanism to decide the optimal unlicensed channel for NR-U user offloading traffic.
In \cite{Song/2019ACCESS}, Song \etal proposed to use the cooperative LBT and $(N+3)$-state semi-Markovian to characterize the effective capacity of NR-U with cooperative communications in unlicensed channels, but they did not consider the improvement of throughput for DL and UL transmission on unlicensed channels due to new NR-U frame.

\subsection{Access Control and NR-U Frame on Unlicensed Channels}
The 3GPP Release 16 \cite{3GPP/38.889} points out that the Category-4 LBT should be used to access the unlicensed channels for gNB initiated MCOT. Wang \etal in \cite{Wang/2017WCSP} proposed a network adaptive LAA-LBT strategy which includes a partially-randomized initial clear channel assessment (ICCA) scheme and adaptive-contention-window-size-adjustment scheme, enabling the Category-4 access method.  In \cite{corr/abs-2001-04779}, the authors studied NR-U with Category-4 LBT at gNBs and Category-2 LBT at the UEs. Zheng \etal in \cite{zheng/TVT2020} proposed a 3-D Markov chain to model a LAA Category-4 LBT procedure with a gap period and 3-D Markov chain to model an 802.11e enhanced distributed channel access (EDCA) procedure. Both of them considered transmission priorities, and then derived the normalized throughput and average channel access delay when $N_L$ LAA eNodeBs contended an unlicensed spectrum with $N_W$ WiFi systems. Additionally, Pei \etal in \cite{Pei/2018TVT} derived an explicit expression of the access probability for the Category-4 LBT with both linear and binary exponential backoff mechanism, and fixed contention window size.

\subsection{Fairness between the Two Radio Access Technologies}
There are several precedent work about the fairness between LTE-U and WiFi systems. Max-min fairness that maximizes the minimum average throughput achievable by users from both LTE-U and WiFi networks is adopted in \cite{globecom/HeSHYZ19,jsac/WangQSL17}. Jain's fairness index is widely used to characterize the throughput-fairness tradeoff \cite{3GPP/NR-U-COT,tvt/TangZCYSH19}. The larger the Jain's fairness index is, the fairer the system is. The maximum Jain's fairness index can be achieved when the two systems have the same throughput. Proportional fairness usually refers to equal node airtime or equal throughput per node between the two systems \cite{mobiwac/KeyhanianLLM18,gao2020/achieving,tccn/MehrnoushRSG18}, which intuitively seems a fair opportunity for both systems to access the unlicensed band.  In \cite{sun2020/towards,gao2020/achieving}, the authors applied the 3GPP fairness constraint in terms of throughput for the WiFi system. 
All these methods obtain fairness between WiFi and LTE system through adjusting the access parameters, such as initial backoff window size, the number of sensing slots. 
Wang \etal in \cite{wang/TVT19,wang/2018optimal} proposed to maximize the throughput on the unlicensed spectrum and ensure the fairness between SBS and WiFi systems,
and proposed a virtual WiFi system to imitate the impact of eNodeB on a WiFi network, which provides a new approach to fairness.

\section{System Model}
\label{sec:sys_model}
\begin{figure}
\centering
\includegraphics[scale=0.6]{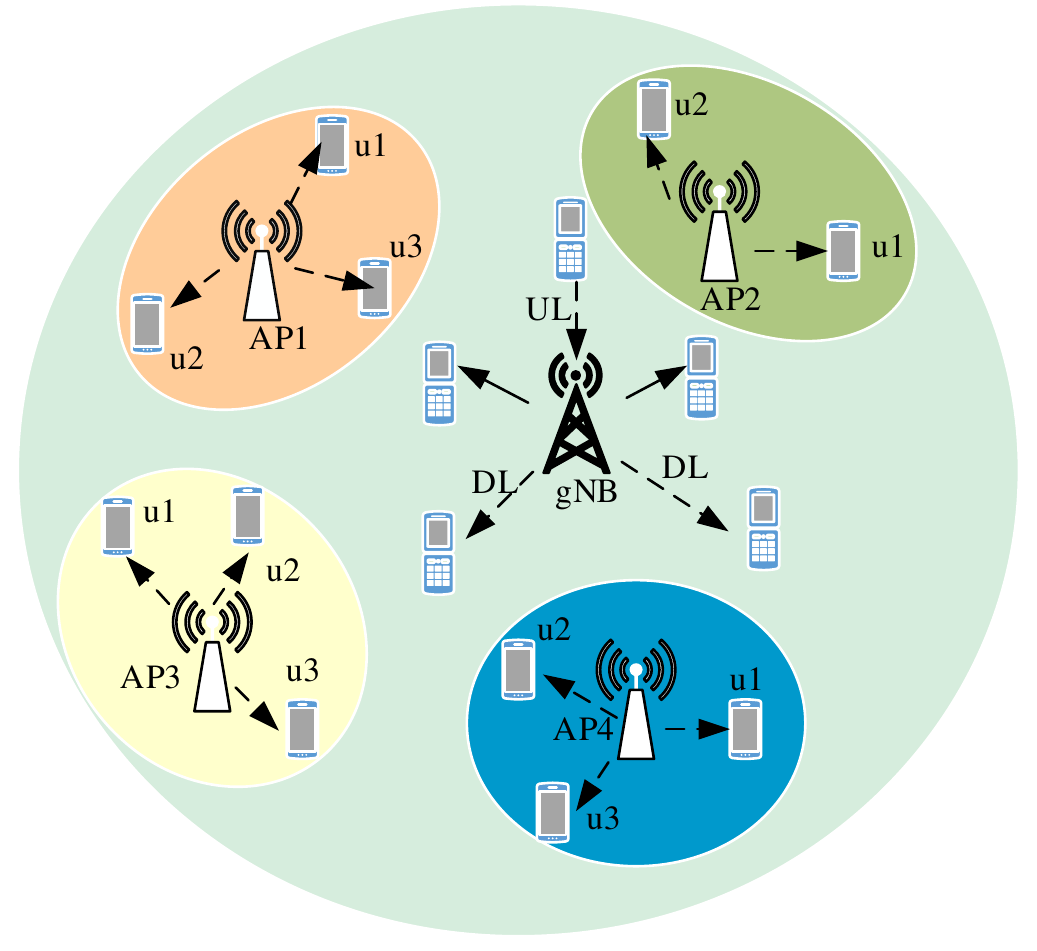}
\caption{Network model.}
\label{fig:network}
\end{figure}   

As shown in Fig.~\ref{fig:network}, we assume that there is a gNB station with $N_u$ cellular user equipment, and there are $K$ WiFi systems. Besides, each WiFi system $k$ consisting of $N_k$ WiFi nodes (including one WiFi AP and $N_k-1$ WiFi stations) utilizes a unique unlicensed channel $f_k$ to avoid interference among $K$ WiFi systems similar to \cite{wang/TVT19}. When the data traffic of some cellular UEs cannot be satisfied, gNB and UEs can offload the data via unlicensed channels to increase the data rate of the UEs. The uplink and downlink UEs can be denoted as $\mathcal{U}$ with $U$ users and $\mathcal{D}$ with $D$ users respectively, and $\mathcal{N}_u={\mathcal{D} \cup \mathcal{U}}$, and $\mathcal{D} \cap \mathcal{U}=\varnothing$, and $|\mathcal{N}_u|=N_u$. We assume that each node can detect other nodes on the same unlicensed carrier with carrier sensing, node buffers are full and there is no hidden terminal.  In our proposed model, we mainly consider the standalone unlicensed band NR-U scenario to align with the 3GPP specification  3GPP TR 38.889 \cite{3GPP/38.889}.
Note that the system model includes two stages: airtime competition and joint power and time allocation for transmission. Here, we assume that gNB competes with the $N_k$ WiFi nodes in the unlicensed channel $f_k$ used by the WiFi network $k$. In total there are $K$ unlicensed channels for gNB to coexist with $K$ WiFi networks.  The notations in this paper are specified in Table~\ref{tab:tab1}.

\begin{table*}
\caption{Notation Definition.}
\label{tab:tab1}
\begin{center}
\begin{tabular}{|l|l|l|l|}
 \hline
 Parameter &Definition &Parameter &Definition\\
 \hline
 $\mathcal{D}$($\mathcal{U}$) & set of downlink (uplink) users & $\mathbb{E}(PL)$ & mean payload of WiFi \\   
 $m_w (m_l)$ & WiFi (gNB) maximum backoff stage &  $N_k$ & number of WiFi nodes under the coverage of WiFi $AP_k$  \\ 
 $W_{w} (W_l)$ & WiFi (gNB) minimum contention window &  $MCOT$& maximum channel occupation time\\
 $\tau_k^w(\tau_k^l)$ & access probability of WiFi nodes (gNB) &  $P_{tr,w} (P_{tr,l})$ & WiFi (gNB) transmission probability  \\ 
 $p_k^w(p_k^l)$ & collision probability of WiFi nodes (gNB) & $P_{s,w}(P_{s,l})$ & WiFi (gNB) success transmission probability \\ 
 $T_\sigma$ & WiFi slot time/CCA slot time &  $T_k^{s,w} (T_k^{s,l})$& success transmission duration of WiFi (gNB)\\
 $T_d$ & gNB defer time duration & $T_k^{c,w} (T_k^{c,l})$& collision transmission duration of WiFi (gNB)\\
 ACK & Acknowledgment length  &  $T_k^{l,w}$& collision duration between WiFi nodes and gNB\\
 DIFS & distributed interframe space  &  $\overline{T_k^{slot}}$ & total average time duration \\
 SIFS & short interframe space &  $t_{d,k} (t_{u,k})$ & time duration for downlink (uplink) transmission\\
 RTS/CTS & request to send (clear to send) & $p_{d,k} (p_{u,k})$ & transmit power for downlink (uplink) \\
 $\delta$ & propagation delay & $N_u$ & number of licensed users in NR \\
 $\mathbb{E}(PL_{k'})$ & mean payload of virtual WiFi system $k^{'}$ & $T_{gNB}$ & time slot of NR system \\
 \hline
\end{tabular}
\end{center}
\end{table*}

\section{Access Procedure of WiFi and {gNB} and Proportional Fairness}

\subsection{WiFi Access on Unlicensed Channels} 
\label{sec:WiFi}
Under the coverage of the $k^{th}$ WiFi AP, there are $N_{k}$ WiFi nodes to share the same unlicensed channel with gNB, and we assume that standard WiFi is 802.11n and the bandwidth is set to 20 MHz. 
WiFi nodes will compete with gNB to access the unlicensed channel by adopting an exponential backoff scheme, and here we consider a saturated NR-WiFi coexisting network, i.e., each node in the network always has packets to transmit. Let $\tau_{k}^{w}$ denote the channel access probability of WiFi nodes in WiFi system $k$ in a randomly chosen slot given by \cite{Bianchi/2000JSAC,Hu/2019TVT} 
\begin{equation}
\label{equ:WiFi_trans_proba}
\tau_{k}^{w}=\frac{2(1-2p_{k}^w)}{(1-2p_{k}^w)( W_{w}+1)+p_{k}^wW_{w}(1-(2p_{k}^w)^{m_{w}})},
\end{equation}
where $p_k^w$ is the collision probability for WiFi nodes transmission on the channel $f_k$, $m_{w}$ is the maximum backoff stage, and $W_{w}$ is the minimum contention window size for WiFi nodes.

For the WiFi nodes, the collision occurs when at least one of the remaining $N_{k}-1$ WiFi nodes or gNB access the same unlicensed channel simultaneously with a WiFi user. The collision probability thus is expressed as 
\begin{equation}
\label{equ:WiFi_collison}    
p_{k}^w=1-(1-\tau_{k}^{w})^{N_{k}-1}(1-\tau_{k}^{l}),
\end{equation}
where $\tau_{k}^l$ is the access success probability of gNB, and it is different from $\tau_{k}^w$ because gNB accesses the unlicensed band by adopting different access parameters. Furthermore, the probability that there is at least one WiFi user transmission during a time slot is $P_{tr,w}$; and the probability when only one WiFi user successfully transmits a packet under the condition that at least one WiFi user transmits a packet is $P_{s,w}$, given by
\begin{equation}
\begin{split}
&P_{tr,w}=1-(1-\tau_{k}^w)^{N_{k}},\\
&P_{s,w}=\frac{N_{k}\tau_{k}^w(1-\tau_{k}^w)^{N_{k}-1}}{P_{tr,w}}.
\end{split}
\end{equation}
\subsection{5G NR Access on the Unlicensed Spectrum}

According to the 3GPP \cite{3GPP/38.889}, NR-U enables both uplink and downlink operation in unlicensed channels with multiple switching points or single switching point in the MCOT.
Category-4 LBT channel access can be used for gNB or UE to initiate a COT for normal data transmissions, and it is recommended for the DL and UL switching gap of up to 16 $\mu$s.  To reduce the overhead, 3GPP \cite{3GPP/NR-U-COT} has proposed that the maximum number of DL/UL switching points within a MCOT for a UE-initiated transmission is 2, and the maximum number of switching points for gNB is 1. Once gNB successfully occupies the unlicensed channel $f_k$, it is allowed to use the maximum time duration up to MCOT for transmission.  In this paper, we consider data offloading for NR on unlicensed channels during MCOT initiated by gNB, and correspondingly the switching point during the MCOT is set to 1.  

\begin{figure*}[!tb]
\centering
\subfloat[Timing graph of the access procedure for NR and WiFi.]{
   \centering
         \includegraphics[scale=0.45]{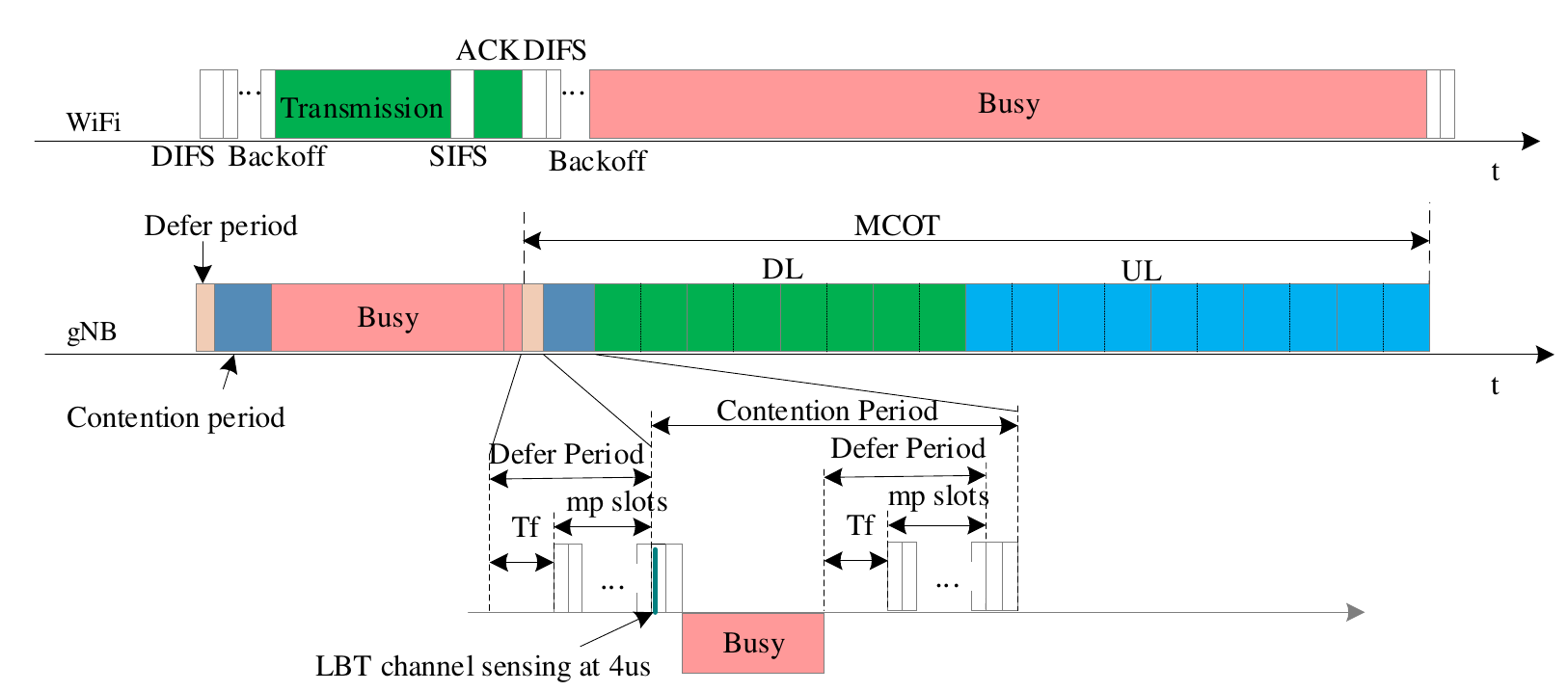}
         \label{fig:NR-U-timing}
}
\subfloat[NR-U DL/UL frames in MCOT.]{
      \centering
         \includegraphics[scale=0.45]{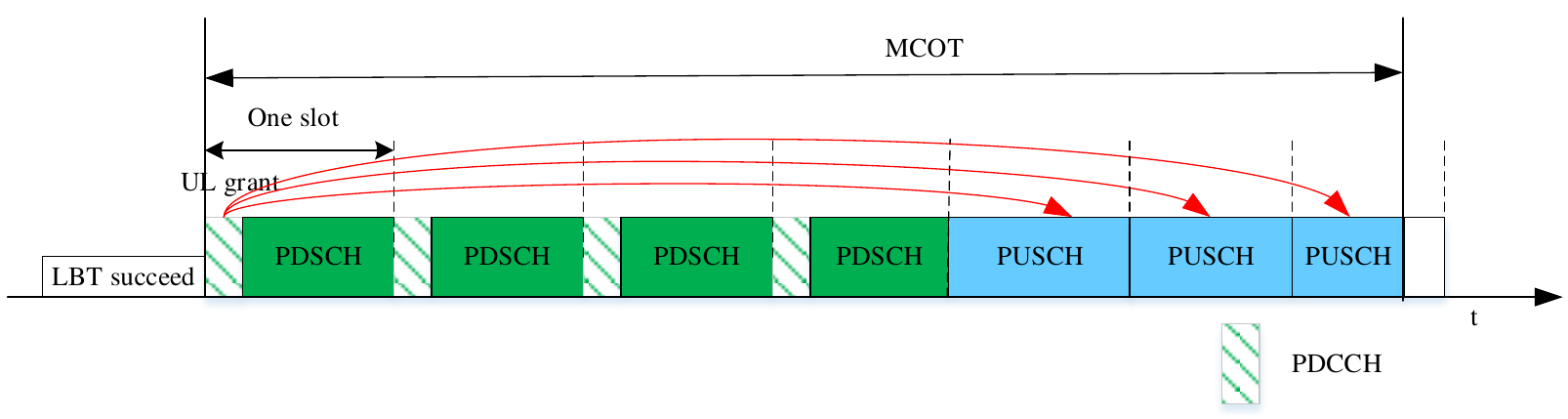}
        \label{fig:NR-U_frame}
        }
\caption{Timing graph of access procedure and frame structure for MCOT.}
        \label{fig:MCOT}
\end{figure*}

The procedure for gNB and WiFi competing an unlicensed channel for transmission is depicted in Fig.~(\ref{fig:NR-U-timing}) according to \cite{lien/2016configurable,xiao/2019performance}. WiFi nodes perform the carrier sense multiple access with collision detection (CSMA/CA) channel sense procedure while gNB performs the LBT Category-4 procedure to access the unlicensed channel. If a WiFi node wins the competition after DIFS and backoff procedure, it can transmit the data immediately.
After successfully receiving data, the WiFi receiver transmits the ACK message back to the transmitter. After a DIFS period, the WiFi will compete with gNB for the unlicensed channel again.

\begin{table*}[bp]
\caption{Category for NR-U LBT \cite{3GPP/38.889}.}
\label{tab:tab2}
\begin{center}
\begin{tabular}{ |c|c|c|c|c|c| }
\hline
LBT access priority $p$& $CW_{min}$ &$CW_{max}$ & $m_p$ & MCOT& Initial window size\\
\hline
 1 & 3  & 7 & 1 & 2 ms&{3,7}\\ 
 2 & 7  & 15 & 1 & 3 ms& {7,15}\\ 
 3 & 15 & 63 & 3 & 8 ms or 10 ms& {15,31,63}\\ 
 4 & 15 & 1023 &7 & 8 ms or 10 ms& {15,31,63,127,255,511,1023}\\
\hline
\end{tabular}
\end{center}
\end{table*}

On the other hand, the access procedure for gNB is composed of two time periods, the defer period (ICCA stage), the contention period (ECCA stage). The defer period (ICCA) is $T_d=T_f+m_p *T_\sigma$, where $T_f$ is the silent period, $T_\sigma$ is the length of a CCA slot (time slot of gNB system), and the value $m_p$ depends on the LBT access priority $p$ as shown in Table \ref{tab:tab2}. If the channel keeps idle during the ICCA period, gNB will transmit data immediately. Otherwise, gNB will proceed to the ECCA period to compete for the unlicensed channel with WiFi nodes. In the ECCA period, a backoff counter with contention window size $W_i$ is started in each backoff stage, where $W_i \in [0,2^i*W_l-1]$,  $i \in (0,m_l-1)$ is the backoff stage, $m_l$ is the maximum backoff stage, and $W_l$ is the initial contention window size for gNB. If the backoff stage reaches its maximum value, it will drop the packet. Otherwise, it will perform the channel sensing procedure. If the channel is sensed idle during ECCA defer period, the backoff procedure is started, otherwise, it will continue to sense the channel for ECCA defer period until the channel is idle. The backoff counter will decrease by one each time when the channel is sensed idle in a CCA slot. If the backoff counter reaches zero,  gNB will transmit data immediately.  If the channel is sensed busy before the counter reaches zero in a backoff stage, gNB will freeze the backoff counter and continue to sense the channel for ECCA defer period. If the channel is sensed idle during ECCA defer period again, gNB will recover the backoff counter and sense the channel in the next CCA slot. Otherwise, it will sense the channel for another ECCA defer duration until the channel is idle. This sensing procedure will repeat until the data is successfully transmitted or dropped.

The frame structure of NR-U during MCOT is the same as the Type-3 frame defined in 3GPP Release 13, and the frames including DL/UL are shown in  Fig.~(\ref{fig:NR-U_frame}) \cite{3GPP/R1-1912763,3GPP/R1-1913064}. If gNB wins the competition,  gNB can transmit data during COT which is composed of up to 10 subframes, and each subframe usually contains two NR time slots. The MCOT is divided into DL (green square) and UL (blue square) burst, each of which is composed of multiple subframes (or time slots).  Each DL time slot is composed of PDCCH and PDSCH, and the first PDCCH will provide a UL grant (red line) for the subsequent UL transmission. 

Following the work in [26] on LTE and WiFi coexistence, the access probability of gNB on the unlicensed channel of WiFi $k$ network is given by
\begin{equation}
\label{equ:tau_u_k} 
\begin{split}
&\tau_k^l=\frac{2p_k^l(1-2p_k^l)[1+(1-p_k^l)^L]}{[2-2(1-p_k^l)^L](1-3p_k^l+2(p_k^l)^2)+p_k^lH_1},\\
&H_1=(2W_l+1)(1-2p_k^l)+2p_k^lW_l[1-2(2p_k^l)^{(m_l-1)}],
\end{split}
\end{equation} where $p_k^l$ is the conditional collision probability detailed below, $W_l$ is the minimum (i.e., initial) contention window and $m_l$ is the maximum backoff stage, and $L$ is number of CCA time slots in the ICCA duration, i.e., $L=\lfloor \frac{T_d}{T_{\sigma}} \rfloor$ of gNB.

 gNB shares the unlicensed channel of WiFi system $k$ with $N_{k}$ WiFi nodes. In this scenario, the collision probability of gNB competing with at least one WiFi node in the WiFi system $k$ is calculated as 
\begin{equation}
\label{equ:5G_collison}
p_{k}^l=1-(1-\tau_{k}^{w})^{N_{k}}.
\end{equation} 

The probability of at least one gNB accessing the unlicensed channel of WiFi system $k$ and the success probability of just one gNB accessing the unlicensed channel under the condition that at least one gNB transmits packets are $P_{tr,l}$, $P_{s,l}$, respectively
\begin{equation}
\begin{split}
&P_{tr,l}=1-(1-\tau_k^l)^{1}=\tau_k^l,\\
&P_{s,l}=\tau_k^l/\tau_k^l=1.
\end{split}
\end{equation}

\subsection{Average Time Slot Duration}
\begin{enumerate}
\item Idle slot: the average duration of the backoff state (idle slot) is calculated as
\begin{equation}\label{equ:Tbf}
\overline{T_{k}^{idle}}=(1-\tau_{k}^l)(1-\tau_{k}^w)^{N_{k}}T_{\sigma},
\end{equation}
where $T_{\sigma}$ is the CCA time slot for the WiFi system and assumed to be the same for gNB.
\item Average time duration for WiFi successful transmission:
the probability of success transmission for a WiFi user in the WiFi system $k$ during a WiFi slot is
$N_{k}\tau_{k}^w (1-\tau_{k}^{w})^{N_{k}-1}(1-\tau_{k}^{l})$. The average success transmission duration of the WiFi AP is 
\begin{equation}\label{equ:Tsw}
\overline{T_{k}^{s,w}}=N_{k}\tau_{k}^w (1-\tau_{k}^{w})^{N_{k}-1}(1-\tau_{k}^{l})T_{k}^{s,w},
\end{equation} 
where $T_{k}^{s,w}$ is the time duration for success transmission of the WiFi user expressed as \cite{Bianchi/2000JSAC}
\begin{equation}
\begin{split}
T_{k}^{s,w}=&(RTS+CTS+H+\mathbb{E}(PL_w)+ACK)/r_w\\
&+3SIFS+DIFS+4\delta,
\end{split}
\end{equation}
where $RTS$,$CTS$ are the size of the handshake messages, $r_w$ is the transmission rate of the WiFi AP. Moreover, $H, \mathbb{E}(PL_{w}), ACK$ stand for the size of the header, the payload of the WiFi frame, and the size of acknowledgment (ACK) message, respectively. $SIFS$ and $DIFS$ represent the shortest inter-frame spacing and time duration of the distributed inter-frame spacing, and $\delta$ is the propagation time. 
\item Average time duration for a gNB success transmission:
the success transmission probability of gNB with $N_{k}$ WiFi nodes is $\tau_{k}^{l}(1-\tau_{k}^{w})^{N_{k}}$. Correspondingly, the average success time duration is
\begin{equation}\label{equ:Tsl}
\overline{T_{k}^{s,l}}=\tau_{k}^{l}(1-\tau_{k}^{w})^{N_{k}}T_{k}^{s,l},
\end{equation}
where $T_{k}^{s,l}=MCOT+T_{gNB}$ is the success transmission time for gNB on the unlicensed channel, and $T_{gNB}$ is the NR slot as a gNB node will not transmit until the beginning of the next gNB slot \cite{gao2020/achieving,Morteza/2018TON}.

\item Average time duration for the failed transmission due to the conflict among WiFi nodes: the conflict is caused by at least two WiFi nodes that transmit in the WiFi system $k$ at the same time, and the conflict probability is $(1-(1-\tau_{k}^{w})^{N_{k}}-N_{k}\tau_{k}^{w}(1-\tau_{k}^{w})^{N_{k}-1})(1-\tau_{k}^l)$.
Therefore, the average failure duration for WiFi nodes is given by
\begin{equation}
\overline{T_{k}^{c,w}}=(1-(1-\tau_{k}^{w})^{N_{k}}-N_{k}\tau_{k}^{w}(1-\tau_{k}^{w})^{N_{k}-1})(1-\tau_{k}^l)T_{k}^{c,w},
\end{equation}
where $T_{k}^{c,w}=\frac{RTS}{r_w}+DIFS+\delta$.

\item Average time duration of the failed transmission due to gNB and WiFi nodes:
the conflict probability between a gNB and at least one WiFi user transmitting at the same time is $\tau_{k}^{l}(1-(1-\tau_{k}^{w})^{N_{k}})$, and the average time duration for the  conflict between  gNB and the WiFi nodes is given by 
\begin{equation}
\overline{T_{k}^{l,w}}=\tau_{k}^{l}(1-(1-\tau_{k}^{w})^{N_{k}})T_{k}^{l,w},\\
\end{equation}
where $T_{k}^{l,w}=max(T_{k}^{c,w}, T_{k}^{c,l})$ is the collision duration between the WiFi nodes and  gNB node, and the failed transmission duration for gNB is $T_{k}^{c,l}=MCOT+T_{gNB}$.
\end{enumerate}

The average time spent on each state on the unlicensed channel of WiFi system $k$ can thus be calculated as
\begin{equation}
\label{equ:T_slot}
\begin{split}
\overline{T_{k}^{slot}}&=\overline{T_{k}^{idle}}+\overline{T_{k}^{s,w}}+\overline{T_{k}^{s,l}}+\overline{T_{k}^{c,w}}+\overline{T_k^{l,w}}.\\
\end{split}
\end{equation}

\subsection{Proportional Fairness between the Two Systems}
The Category-4 LBT proposed in 3GPP \cite{3GPP/38.889} enables gNB and WiFi coexistence but it cannot guarantee fair coexistence between the two systems when there exists an aggressive node that occupies most of the unlicensed channel \cite{gao2020/achieving}. To make the two systems fairly coexist on the unlicensed channel, the most intuitive measure is to equalize the successful airtime ratio that the two systems used to successfully transmit on the unlicensed channel. The access parameters of both systems, such as the initial backoff window sizes, the numbers of sensing slots, the maximum backoff stages, and the retry limits and transmission opportunities, can be adjusted to achieve the desired airtime ratio\cite{gao2020/achieving}.   In this paper, we consider adjusting the initial contention window size of the NR system as in 
\cite{cogsima/Tuladhar0V18,gao2020/achieving}.
The successful airtime ratios for NR and one WiFi node are given by
\begin{equation}
\label{equ:airtimeratio}
\begin{split}
&r_{gNB,s}=\frac{P_{s,l}P_{tr,l}(1-P_{tr,w})T_k^{s,l}}{\overline{T_k^{slot}}} \\
& r_{w,s}=\frac{P_{s,w}P_{tr,w}(1-P_{tr,1})T_k^{s,w}}{\overline{T_k^{slot}}N_k}.
\end{split}
\end{equation}

Proportional fairness between gNB and WiFi is achieved when each node achieves an identical (or a desired) fraction of time over the unlicensed channel, e.g., $r_{gNB,s}=r_{w,s}$, we can obtain
\begin{equation}
\tau_k^{l}(W_l)=\frac{\tau_k^w T_k^{s,w}}{T_k^{s,l}-\tau_k^w \cdot T_k^{s,l}+\tau_k^w \cdot T_k^{s,w}}. \\
\end{equation}
The optimal initial contention window size of gNB is 
\begin{equation}
 W_l^{*}={\tau_k^{l}}^{-1}(\frac{\tau_k^w T_k^{s,w}}{T_k^{s,l}-\tau_k^w \cdot T_k^{s,l}+\tau_k^w \cdot T_k^{s,w}}),
\end{equation}
where ${\tau_k^{l}}^{-1}$ is the inverse function of $\tau_k^{l}$. In the following context, the optimal initial contention window size $W_l^{*}$ is set as the default value in our proposed method.

\section{Throughput Fairness and Optimal Time and Power Allocation}
The throughput and time fairness cannot be obtained simultaneously by adjusting the contention window size alone \cite{cogsima/Tuladhar0V18}. Thus, we first  obtain the throughput of the two systems when they coexist on a single unlicensed band, and then make the two systems satisfy throughput fairness.

\subsection{Data Rate Analysis for WiFi and NR Systems}
\subsubsection{WiFi Data Rate}
For WiFi systems, the normalized system data rate is a ratio of the successful transmission of information packets in a slot and the average duration of a slot time
\cite{Hu/2019TVT,Morteza/2018TON}.
Therefore, the normalized data rate for the WiFi system $k$ is defined as 
\begin{equation}
\begin{split}
\label{equ:R_WiFi}
R_{k}^W&=\frac{P_{tr,w}P_{s,w}(1-P_{tr,l})\mathbb{E}(PL_k)}{\overline{T_{k}^{slot}}}\\
&=\frac{N_{k}\tau_{k}^w(1-\tau_{k}^{w})^{N_{k}-1}(1-\tau_{k}^l)\mathbb{E}(PL_{k})}{\overline{T_{k}^{slot}}}, \forall k \in \mathcal{K}.
 \end{split}
\end{equation}

\subsubsection{NR Data Rate}
The effective downlink data rate for each cellular user by using the unlicensed channel of the WiFi system $k$ is given by
\begin{equation}
 \begin{split}
    R_{d,k}&=\frac{P_{tr,l}P_{s,l}(1-P_{tr,w})t_{d,k}}{\overline{T_{k}^{slot}}}B_k \log_2(1+\frac{p_{d,k}|{h}_{d,k}|^2}{\sigma^2})\\
    &=\frac{{\tau_{k}}^l(1-\tau_{k}^w)^{N_{k}}t_{d,k}}{\overline{T_{k}^{slot}}}B_k\log_2(1+\frac{p_{d,k}|{h}_{d,k}|^2}{\sigma^2}),
 \end{split}
\end{equation}
where $p_{d,k}$ is the transmit power from gNB to cellular UE $d$ by using the WiFi system $k$, ${h}_{d,k}$ is the channel coefficient from gNB to the user $d$,  $\sigma^2$ is the noise power, $t_{d,k}$ is time for gNB downlink transmission to cellular UE, and $B_k$ is the bandwidth.  Similarly, the uplink data rate from user $u \in \mathcal{U}$ to gNB  on the unlicensed channel of WiFi system $k$ is
\begin{equation}
R_{u,k}= \frac{\tau_{k}^l(1-\tau_{k}^w)^{N_{k}} t_{u,k} }{\overline{T_{k}^{slot}}} B_k\log_2(1+\frac{p_{u,k}|{h}_{u,k}|^2}{\sigma^2}),
\end{equation}
where $p_{u,k}$ is the transmit power of cellular UE $u \in \mathcal{U}$ to gNB by using the unlicensed channel $f_k$, $t_{u,k}$ is the time for cellular UE $u \in \mathcal{U}$ uplink transmission, and ${h}_{u,k}$ is the channel coefficient from the cellular user $u$ to gNB. 

Furthermore, the uplink data rate for each user $u \in \mathcal{U}$ and the downlink data rate for each user $d \in \mathcal{D}$ by using the unlicensed channel $f_k$ are calculated as 
\begin{equation}
\begin{split}
&R_k^U=\sum_{u \in \mathcal{U}} R_{u,k}, \forall k \in \mathcal{K},\\
&R_{k}^D=\sum_{d \in \mathcal{D}} R_{d,k}, \forall k \in \mathcal{K}.
\end{split}
\end{equation}

\subsection{Fairness Constraints between Different RATs}

\begin{figure}[!tb]
\centering
\subfloat[WiFi $k$ - gNB]{
   \centering
         \includegraphics[width=0.24\textwidth]{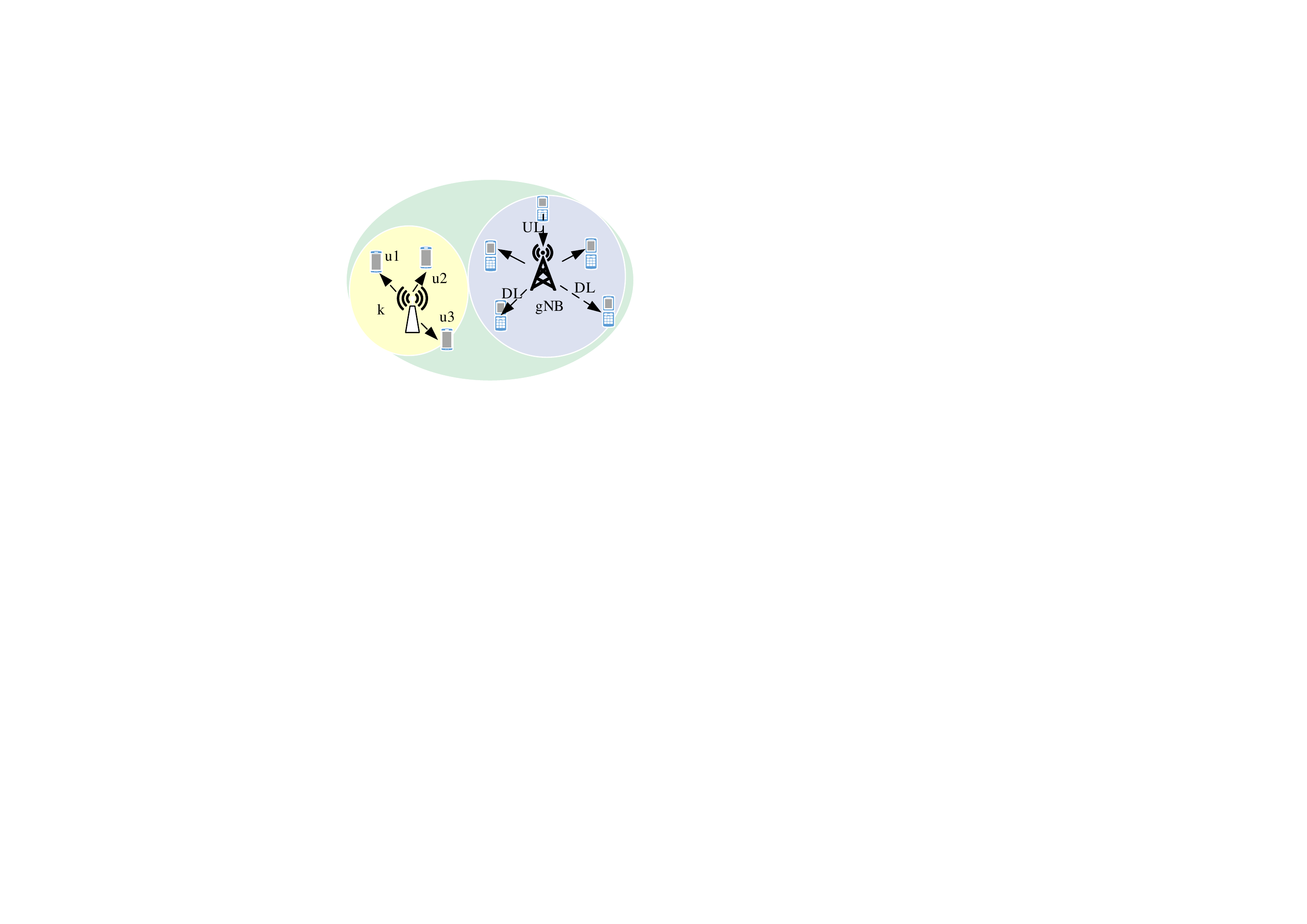}
         \label{fig:WiFi_gNB}}
\subfloat[WiFi $k$ - WiFi $k^{'}$]{
      \centering
         \includegraphics[width=0.24\textwidth]{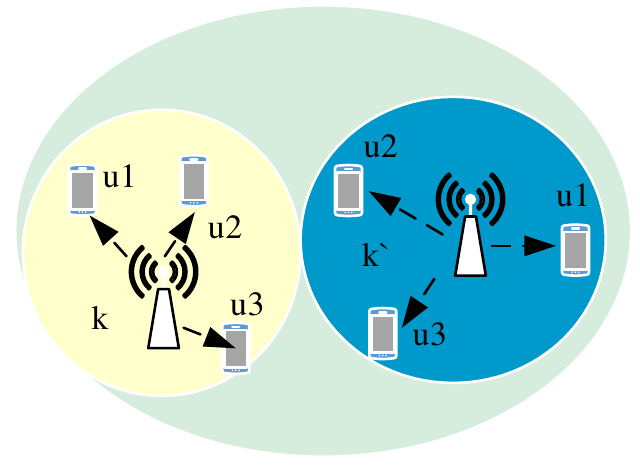}
         \label{fig:WiFi_WiFi}}
\caption{Coexistence of WiFi-gNB and WiFi-WiFi.}
        \label{fig:fairness}
\end{figure}

Following the 3GPP fairness definition, Fig.~(\ref{fig:WiFi_gNB}) represents the coexistence between WiFi system $k$ and the NR system, and Fig.~(\ref{fig:WiFi_WiFi}) indicates the coexistence between the WiFi system $k$ and a replaced virtual WiFi system $k^{'}$.  According to the definition to make gNB and WiFi fairly coexist, the data rate of WiFi nodes in the WiFi system $k$ influenced by gNB should not be larger than the data rate of the WiFi system $k$ influenced by a virtual WiFi system $k^{'}$ supporting the same level of traffic load \cite{wang/2018optimal,wang/TVT19,network/HuangCHLR18,infocom/GuanM16}. Firstly, we need to calculate the data rate of the WiFi system $k$ coexisted with the WiFi system $k^{'}$.  To imitate the effect of gNB on WiFi system $k$, WiFi system $k^{'}$ is assumed to have the same data rate and the number of users as the NR system. The two WiFi systems will compete for the unlicensed channel $f_k$, and WiFi system $k^{'}$ is assumed to have  similar parameters as the WiFi system $k$ except the average payload. After replacement, each WiFi system $k$ will just compete with one virtual WiFi system $k^{'}$.  WiFi system $k^{'}$ with $N_u$ nodes can set its payload size to achieve the same data rate as what a gNB can obtain when a gNB coexists with the WiFi system $k$.

\subsubsection{Payload of WiFi System $k'$}

To achieve the same data rate as what a gNB can obtain when a gNB coexists with WiFi system $k$, WiFi system $k'$ with $N_u$ nodes can set its payload size. Note that the data rate of the NR system includes uplink and downlink rates. That is,

\begin{equation}
\frac{P_{s}^{k'} P_{tr}^{k'} \mathbb{E}(PL_{k'})}{\left(1-P_{tr}^{k'}\right) \sigma+P_{tr}^{k'} P_{s}^{k'} T_{s}^{k'}+P_{tr}^{k'}\left(1-P_{s}^{k'}\right) T_{c}^{k'}}=R_k^D+R_k^U
\end{equation}

where
\begin{equation}
\begin{split}
&\tau_{k'}=\frac{2}{1+W_{k}+p_{k'}W_{k}\sum_{t=0}^{m_{w}-1}(2p_{k'})^t},\\
&p_{k'}=1-(1-\tau_{k'})^{N_u-1},\\
&P_{tr}^{k'}=1-(1-\tau_{k'})^{N_u},\\
&P_{s}^{k'}=\frac{N_u\tau_{k'}(1-\tau_{k'})^{N_u-1}}{P_{tr}^{k'}},\\
&T_s^{k'}=\frac{RTS+CTS+(H+\mathbb{E}(PL_{k'})+ACK}{r_w}+3SIFS\\
&+DIFS+4\delta,\\
&T_c^{k'}=\frac{RTS}{r_w}+DIFS+\delta.
\end{split}
\end{equation}
Thus, the average size of the payload of the WiFi system $k^{'}$ is 
\begin{equation}
 \begin{split}
\label{equ:virtual_payload}
&\mathbb{E}(PL_{k'})=\\
&\frac{((1-P_{tr}^{k'})T_{\sigma}+P_{tr}^{k'}(1-P_s^{k'})T_c^{k'}+YP_{tr}^{k'}P_{s}^{k'})r_w (R_{k}^{D}+R_k^U)}{P_{tr}^{k'}P_{s}^{k'}(r_w-(R_{k}^{D}+R_k^U))},
 \end{split}
\end{equation}
where $Y=\frac{RTS+CTS+H+ACK}{r_w}+3SIFS+DIFS+4\delta$.

\subsubsection{WiFi-WiFi System}
The initial backoff window size and the maximum backoff stage of WiFi $k^{'}$ are set to $W_{w}$ and $m_{w}$, respectively, which are identical to the parameters in Section $\ref{sec:WiFi}$. As the two WiFi systems $k$ and $k^{'}$ adopt the same access parameters, the hybrid network can be regarded as a single WiFi network with $N_k+N_u$ WiFi nodes to access the unlicensed channel $f_k$. The data rate of the hybrid network is 
\begin{equation}
\label{equ:R_h}
\begin{split}
&R_{con}=\\
&\frac{P_{tr}^{con} P_s^{con} \mathbb{E}(PL_{con})}{(1-P_{tr}^{con})T_{\sigma}+P_{tr}^{con}P_s^{con} T_{s}^{con}+P_{tr}^{con}(1-P_s^{con})T_c^{con}},
\end{split}
\end{equation}
where
\begin{equation}
\label{equ:para_HY}
\begin{split}
&\tau_{k}^{con}=\frac{2}{1+W_{w}+p_{k}^{con} W_{w} \sum_{t=0}^{m_{w}-1} (2p_{k}^{con})^t},\\
&p_{k}^{con}=1-(1-\tau_{k}^{con})^{N_u+N_k-1},\\
&P_{tr}^{con}=1-(1-\tau_{k}^{con})^{N_u+N_k},\\
&P_s^{con}=\frac{(N_u+N_k)\tau_{k}^{con}(1-\tau_{k}^{con})^{N_k+N_u-1}}{P_{tr}^{con}},\\
&T_s^{con}=\frac{RTS+CTS+(H+\mathbb{E}(PL_{con}))+ACK}{r_w}+3SIFS\\
&+DIFS+4\delta,\\
&T_c^{con}=\frac{RTS}{r_w}+DIFS+\delta,\\
& \mathbb{E}(PL_{con})=\frac{N_k \mathbb{E}(PL_{k})+N_u \mathbb{E}(PL_{k'})}{N_k+N_u}.
\end{split}
\end{equation}
Then, the average data rate that the WiFi system $k$ coexisted with the virtual WiFi system $k^{'}$ is given by equation \eqref{equ:r_w_w}.
\begin{figure*}
\begin{align}
\label{equ:r_w_w}
 R_{k}^{k'}=\frac{N_k \mathbb{E}(PL_k)R_{con}}{N_k \mathbb{E}(PL_k)+N_u \mathbb{E}(PL_{k'})}=\frac{ N_k P_{tr}^{con} P_s^{con} \mathbb{E}(PL_k) }{(N_k+N_u)((1-P_{tr}^{con})T_{\sigma}+P_{tr}^{con}P_s^{con} T_{s}^{con}+P_{tr}^{con}(1-P_s^{con})T_c^{con})}.
\end{align}
\end{figure*}

For fair coexistence in term of throughput, the rate relationship must satisfy
\begin{equation}
\label{equ:Fairness}
    R_{k}^{k'} \leq R_{k}^{W}.
\end{equation}

\subsection{DL and UL Time Fraction Constraints}
According to the frame structure of the MCOT, 
the total downlink time duration $t_{d,k}$ and uplink time duration $t_{u,k}$ for occupying the unlicensed channel of WiFi system $k$ should not be more than MCOT are given by
\begin{equation}
\label{equ:MCOT}
  \sum_{d \in \mathcal{D}}t_{d,k}+\sum_{u \in \mathcal{U}}t_{u,k} \leq MCOT, \forall k \in \mathcal{K}.
\end{equation}

\subsection{Power Constraint}

The average uplink power for  each cellular users $u \in \mathcal{U}$ and  total downlink power for all users $d \in \mathcal{D}$ on all unlicensed channels during the MCOT should be smaller than the threshold $P_{avg}$, $P_{gNB}^{max}$, respectively. That is,
\begin{equation}
\label{equ:aver_up_poewr}
 \frac{1}{U}\sum_{u \in \mathcal{U}}\sum_{k \in K} \frac{t_{u,k}}{MCOT}. p_{u,k} \leq P_{avg}, \forall u \in \mathcal{U},\\
\end{equation}
\begin{equation}
\label{equ:total_dl_poewr}
 \sum_{d \in \mathcal{D}} \sum_{k \in K}\frac{t_{d,k}}{MCOT}. p_{d,k} \leq P_{gNB}^{max},\\
\end{equation}
where $P_{avg}$ and $P_{d,k}^{max}$ are the average maximum transmit power of each user $u \in \mathcal{U}$  and total maximum transmit power for all users  $d \in \mathcal{D}$ on all unlicensed channels. 

The downlink transmit power for each cellular user $d \in \mathcal{D}$ by gNB in each unlicensed channel $f_k$  by gNB should be smaller than $P_{d,k}^{max}$. That is,
\begin{equation}
\label{equ:dl_poewr}
 p_{d,k}\leq P_{d,k}^{max}, \forall d \in \mathcal{D}, k \in \mathcal{K},
\end{equation}
where $P_{d,k}^{max}$ is the maximum downlink transmit power for each user on each unlicensed channel.

\section{NR-U Throughput Maximization Problem Formulation}
\label{sec:pro_formulation}

To improve the unlicensed channel utilization, the problem is formulated to maximize the downlink and uplink throughputs of NR-U on all available unlicensed channels, with both time duration and power constraints considered. That is,
\begin{equation}
\label{equ:P1}
\begin{split}
P1:
\max_{\bm{t_d},\bm{t_u}, \bm{p_d},\bm{p_u}} & \sum_{d \in D}\sum_{k \in \mathcal{K}}R_{d,k}+\sum_{u \in U}\sum_{k \in \mathcal{K}}R_{u,k}\\
\textrm{s.t.}  &~ \eqref{equ:Fairness}, \eqref{equ:MCOT}, \eqref{equ:aver_up_poewr}, \eqref{equ:total_dl_poewr}, \eqref{equ:dl_poewr},\\
              & t_{d,k}, t_{u,k}, p_{u,k}, q_{u,k}\geq 0, \forall d \in \mathcal{D}, u \in \mathcal{U}, k \in \mathcal{K},
  \end{split}
\end{equation}
where the variables are $\bm{t_d}=\{t_{d,k}\}_{d\in\mathcal{D},k \in \mathcal{K}}$, $\bm{t_u}=\{t_{u,k}\}_{u\in\mathcal{U},k \in \mathcal{K}}$,  $\bm{p_u}=\{p_{u,k}\}_{u\in\mathcal{U},k \in \mathcal{K}}$, and $\bm{p_d}=\{p_{d,k}\}_{d\in\mathcal{D},k \in \mathcal{K}}$.

To make the problem more tractable, we transform the variables as  $\bm{q}=\{q_{d,k}=p_{d,k}\frac{t_{d,k}}{MCOT}\}_{k\in \mathcal{K}, d \in \mathcal{D}}$, $\overline{\bm{q}}=\{q_{u,k}=p_{u,k}\frac{t_{u,k}}{MCOT}\}_{k\in \mathcal{K}, u \in \mathcal{U}}$, and 
let $\frac{\tau_k^l(1-\tau_k^w)^{N_k}}{\overline{T_{k}^{slot}}}=p_k$. Then the problem is reformulated as 
\begin{subequations}
\label{equ:P2}
\begin{align}
P2:
\max_{\bm{t_d},\bm{t_u}, \bm{q},\bm{\overline{q}}} & \sum_{d \in D}\sum_{k \in \mathcal{K}}R_{d,k}+\sum_{u \in U}\sum_{k \in \mathcal{K}}R_{u,k}\\
\textrm{s.t.} 
              & ~~ \eqref{equ:Fairness}, \eqref{equ:MCOT}, \\
              &\frac{1}{U}\sum_{u \in \mathcal{U}}\sum_{k \in K} q_{u,k} \leq P_{avg},\label{subeqn:ue_aver_power}\\
              &	\sum_{d \in \mathcal{D}} \sum_{k \in K} q_{d,k}\leq P_{gNB}^{max},\label{subeqn:dl_total_power}\\
              & q_{d,k}\leq \frac{t_{d,k}}{MCOT}.P_{d,k}^{max},\forall d \in \mathcal{D}, k \in \mathcal{K},\label{subeqn:dl_powertime}\\
              & t_{d,k}, t_{u,k}, q_{d,k}, q_{u,k} \geq 0, \forall d \in \mathcal{D}, u \in \mathcal{U}, k \in \mathcal{K},
\end{align}
\end{subequations}

where 
\begin{subequations}
\begin{align}
    &R_{d,k}=p_kt_{d,k}B_k\log_2(1+\frac{MCOT \cdot q_{d,k}|{h}_{d,k}|^2}{\sigma^2 t_{d,k}}),\\
    &R_{u,k}=p_kt_{u,k}B_k\log_2(1+\frac{MCOT \cdot q_{u,k}|{h}_{u,k}|^2}{\sigma^2t_{u,k}}).
\end{align}
\end{subequations}
Equation~(\ref{equ:Fairness}) can be rewritten as 
\begin{equation}
\begin{split}
\label{equ:constriant2}
\frac{R_{k}^D+R_k^U}{r_w-(R_k^D+R_k^U)}&\geq \\
&\frac{r_w Z-r_w(N_k+N_u)(Q+Y)-N_k\mathbb{E}(PL_k)}{N_us_k},
\end{split}
\end{equation}
where $s_k=\frac{((1-P_{tr}^{k'})T_{\sigma}+P_{tr}^{k'}(1-P_s^{k'})T_c^{k'}+YP_{tr}^{k'}P_{s}^{k'})r_w}{P_{tr}^{k'}P_{s}^{k'}}$, $Q=\frac{(1-P_{tr}^{con})T_{\sigma}+P_{tr}^{con}(1-P_s^{con})T_c^{con}}{P_{tr}^{con}P_{s}^{con}}$, and \\$Z=\frac{\overline{T_{k}^{slot}}}{\tau_{k}^w(1-\tau_{k}^{w})^{N_{k}-1}(1-\tau_{k}^l)}$. Afterwards, we transform  (\ref{equ:Fairness}) to
\begin{equation}
\label{equ:Fairness_2}
R_{k}^{D}+R_k^U =\sum_{d \in \mathcal{D}}R_{d,k}+\sum_{u \in \mathcal{U}}R_{u,k}
\geq  \frac{\phi r_w}{1+\phi},
\end{equation}
where 
\begin{equation}
  \phi=\frac{r_w Z-r_w(N_k+N_u)(Q+Y)-N_k\mathbb{E}(PL_k)}{N_u s_k}.
\end{equation}

\begin{theorem}
\label{the:lilun}
The objective function of problem (\ref{equ:P2}) is convex with respect to (w.r.t) $  (\bm{t_u,t_d,q_d,q_u}).$
\end{theorem}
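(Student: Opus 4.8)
The plan is to reduce the claim to one elementary fact, namely that the \emph{perspective} of a concave function is jointly concave. First observe that the objective of (\ref{equ:P2}),
\begin{equation*}
F(\bm{t_u},\bm{t_d},\bm{q_d},\bm{q_u})=\sum_{d\in\mathcal{D}}\sum_{k\in\mathcal{K}}R_{d,k}+\sum_{u\in\mathcal{U}}\sum_{k\in\mathcal{K}}R_{u,k},
\end{equation*}
is a finite sum whose summands involve pairwise disjoint pairs of scalar variables: $R_{d,k}$ depends only on $(t_{d,k},q_{d,k})$ and $R_{u,k}$ only on $(t_{u,k},q_{u,k})$, every other symbol being a fixed constant. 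Hence $F$ is jointly concave in $(\bm{t_u},\bm{t_d},\bm{q_d},\bm{q_u})$ as soon as each summand is concave in its own two variables, and then $P2$ — a maximization with a concave objective — is a convex program. So it suffices to analyze the generic term
\begin{equation*}
f(t,q)=c\,t\log_2\!\Big(1+\alpha\,\frac{q}{t}\Big),\qquad c:=p_kB_k\ge 0,\ \ \alpha:=\frac{MCOT\,|h|^2}{\sigma^2}>0,
\end{equation*}
on $\{t>0,\ q\ge0\}$, extended to $t=0$ by continuity.

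First I would record that $g(y):=\log_2(1+\alpha y)$ is concave on $\{y\ge0\}$, being the composition of the concave function $\log_2(\cdot)$ with an affine map. Then $f(t,q)=c\cdot t\,g(q/t)$ is, up to the nonnegative factor $c$, exactly the perspective $t\,g(q/t)$ of $g$, hence jointly concave on $\{t>0\}$; multiplying by $c\ge0$ preserves concavity. That $c=p_kB_k\ge0$ is immediate since $B_k>0$ and $p_k=\tau_k^l(1-\tau_k^w)^{N_k}/\overline{T_k^{slot}}$ is a quotient of nonnegative quantities. Because $t\,g(q/t)\to0$ as $t\to0^+$ for fixed $q\ge0$, setting $R_{d,k}=R_{u,k}=0$ whenever the corresponding time variable vanishes extends $f$ continuously and concavely to the closed feasible set — the one point I would spell out carefully.

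If a self-contained certificate is preferred to citing the perspective lemma, I would instead differentiate directly. Writing $u:=q/t$, one obtains
\begin{equation*}
\nabla^2\!\big(t\,g(u)\big)=\frac{g''(u)}{t}\begin{pmatrix} u^2 & -u\\ -u & 1\end{pmatrix}=\frac{g''(u)}{t}\begin{pmatrix} u\\ -1\end{pmatrix}\begin{pmatrix} u & -1\end{pmatrix},
\end{equation*}
a rank-one positive-semidefinite matrix scaled by $g''(u)/t$. Since $t>0$ and $g''(y)=-\alpha^2/\big((1+\alpha y)^2\ln2\big)<0$, this Hessian is negative semidefinite, so $f$ — and therefore $F$ — is concave.

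I do not expect a genuinely hard step: the only delicate part is the boundary bookkeeping when some $t_{d,k}$ or $t_{u,k}$ vanishes, which is handled by the closure property of perspective functions noted above. (Convexity of the feasible region of (\ref{equ:P2}), which one also needs before calling $P2$ a convex problem in full, is a separate issue: it follows from the affine constraints (\ref{subeqn:ue_aver_power})--(\ref{subeqn:dl_powertime}) together with the rewritten fairness bound \eqref{equ:Fairness_2}, whose left side is the concave quantity $R_k^D+R_k^U$ bounded below by a constant; it is not part of the present statement.)
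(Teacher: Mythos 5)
Your proof is correct and follows essentially the same route as the paper's Appendix A: each term $R_{d,k}$ (resp. $R_{u,k}$) is the nonnegative multiple $p_kB_k$ of the perspective of the concave function $\log_2(1+\alpha q)$, hence jointly concave in its own $(t,q)$ pair, and the sum of concave functions is concave. Your added Hessian certificate and the continuous extension at $t=0$ are extra rigor on the same argument rather than a different approach.
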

\begin{proof}
Please refer to appendix A.
\end{proof}

\begin{lemma}
\label{lem:tuilun}
 The equality of the second constraint should be held.
\end{lemma}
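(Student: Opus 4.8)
The plan is to prove Lemma~\ref{lem:tuilun} --- here ``the second constraint'' is the MCOT time budget~\eqref{equ:MCOT}, $\sum_{d\in\mathcal D}t_{d,k}+\sum_{u\in\mathcal U}t_{u,k}\le MCOT$ --- by a contradiction argument built on the fact that each per-user rate is monotone (indeed strictly increasing, when the associated power is positive) in its own time variable. Suppose $(\bm{t_d}^{\star},\bm{t_u}^{\star},\bm q^{\star},\overline{\bm q}^{\star})$ is optimal for $P2$ but that for some channel $k_0$ the budget is slack, $\varepsilon:=MCOT-\sum_{d}t_{d,k_0}^{\star}-\sum_{u}t_{u,k_0}^{\star}>0$. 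I would pick one downlink index $d_0$ on channel $k_0$ (the uplink case is identical, and in fact simpler, since no per-channel uplink power cap appears in $P2$) and form a new point that agrees with the optimum everywhere except that $t_{d_0,k_0}$ is replaced by $t_{d_0,k_0}^{\star}+\varepsilon$; all power variables and all other time variables are left untouched. (One could instead argue through the KKT multiplier of~\eqref{equ:MCOT}, but this perturbation argument is cleaner and avoids invoking strong duality.)

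The next step is to verify feasibility of the perturbed point. The budget~\eqref{equ:MCOT} for $k_0$ now holds with equality and is unchanged for $k\ne k_0$; constraints~\eqref{subeqn:ue_aver_power} and~\eqref{subeqn:dl_total_power} involve only $\bm q,\overline{\bm q}$ and are therefore intact; constraint~\eqref{subeqn:dl_powertime} for $(d_0,k_0)$, namely $q_{d_0,k_0}\le \frac{t_{d_0,k_0}}{MCOT}P_{d_0,k_0}^{max}$, only has its right-hand side enlarged; and non-negativity is obvious. For the fairness condition in the form~\eqref{equ:Fairness_2}, both $R_k^W$ and the constant $\frac{\phi r_w}{1+\phi}$ are independent of $(\bm t,\bm q)$, while enlarging $t_{d_0,k_0}$ cannot decrease $R_{k_0}^D+R_{k_0}^U$, so the inequality for $k_0$ is preserved (and those for $k\ne k_0$ are untouched). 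Hence the perturbed point is feasible.

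Then I would show the objective strictly increased. With the substitutions made in $P2$, $R_{d_0,k_0}=p_{k_0}B_{k_0}\,g(t_{d_0,k_0})$ where $g(t)=t\log_2(1+c/t)$ and $c=MCOT\,q_{d_0,k_0}|h_{d_0,k_0}|^2/\sigma^2\ge 0$. A direct computation gives $g'(t)=\frac{1}{\ln 2}\big[\ln(1+c/t)-\tfrac{c/t}{1+c/t}\big]$, which is strictly positive for $c>0$ by the elementary bound $\ln(1+x)>x/(1+x)$ for $x>0$ --- the same concavity/perspective structure already used for Theorem~\ref{the:lilun}. Since $p_{k_0},B_{k_0}>0$, the objective strictly increases provided $q_{d_0,k_0}>0$, contradicting optimality of the original point.

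The one gap --- and the step I expect to be the main obstacle to a clean write-up --- is the degenerate situation in which, on the slack channel $k_0$, \emph{every} power variable vanishes, so $R_{k_0}^D+R_{k_0}^U=0$ and no time bump changes anything. I would dispose of it as follows. If the fairness bound is active on $k_0$, i.e.\ $\frac{\phi r_w}{1+\phi}>0$, then $R_{k_0}^D+R_{k_0}^U=0$ already violates~\eqref{equ:Fairness_2}, so this configuration is infeasible and a fortiori cannot be optimal. Otherwise channel $k_0$ contributes nothing to the objective and sits in no other active constraint, so its times $\{t_{d,k_0},t_{u,k_0}\}$ may be rescaled to sum to exactly $MCOT$ with no change in feasibility or objective value; thus there is still an optimal solution meeting~\eqref{equ:MCOT} with equality, which is what the lemma asserts. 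Making this case split precise, and ensuring the monotonicity argument is applied only to a variable whose coefficient $c$ is genuinely positive, is the only real subtlety.
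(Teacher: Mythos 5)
Your proposal is correct and rests on the same core observation as the paper's own proof: the objective is monotonically increasing in each $t_{i,k}$ (because $t\mapsto t\log_2(1+c/t)$ is increasing for $c>0$), so a slack time budget in \eqref{equ:MCOT} could always be exploited to raise the objective. The paper simply rewrites the objective as $\sum_k\sum_i t_{i,k}U_{i,k}$ and asserts this monotonicity, whereas you additionally verify it via the explicit derivative, check feasibility of the perturbed point against every constraint, and dispose of the degenerate all-zero-power case --- details the paper's two-line argument glosses over but which do not change the approach.
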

\begin{proof}
Please refer to appendix B.
\end{proof}

\section{Optimal Time and Power Allocation}
\label{sec:pro_decompose}

\subsection{Optimal Time Allocation for Uplink and Downlink Transmission}
According to the basic principle of successive convex approximation (SCA), $P2$ can be divided into two problems, the optimal time allocation problem P3 given power allocation and the optimal power allocation problem P4 given time allocation.  Problem $P3$ is formulated as
\begin{subequations}
\label{equ:P3}
\begin{align}
P3:
\max_{\bm{t_d},\bm{t_u}} & \sum_{d \in \mathcal{D}}\sum_{k \in \mathcal{K}}R_{d,k}+\sum_{u \in U}\sum_{k \in \mathcal{K}}R_{u,k}\\
\textrm{s.t.} 
              & ~ \eqref{equ:Fairness_2}, \eqref{equ:MCOT}, \eqref{subeqn:dl_powertime}\\
              & t_{d,k},t_{u,k} \geq 0, \forall d \in \mathcal{D}, u \in \mathcal{U}, k \in \mathcal{K}.
\end{align}
\end{subequations}

The Lagrange function of problem P3 is given by
\begin{equation}
\begin{split}
    &L_1(\bm{t_d},\bm{t_u},\bm{\alpha},\bm{\beta},\bm{\xi},\bm{\epsilon},\bm{\eta})=\sum_{k \in \mathcal{K}} \sum_{d \in \mathcal{D}} R_{d,k} +\sum_{u \in U}\sum_{k \in \mathcal{K}}R_{u,k}\\
    &+\sum_{k \in \mathcal{K}}\alpha_{k} (\sum_{d \in \mathcal{D}}R_{d,k}+\sum_{u \in \mathcal{U}}R_{u,k}-\frac{\phi r_w}{1+\phi})\\
    &+\sum_{k \in \mathcal{K}} \beta_{k}(MCOT- (\sum_{d \in \mathcal{D}}t_{d,k}+\sum_{u \in \mathcal{U}}t_{u,k}))\\
    &+\sum_{k \in \mathcal{K}} \sum_{d \in D} \xi_{d,k}(\frac{t_{d,k}}{MCOT} \cdot P_{d,k}^{max}-q_{d,k})\\
     &+\sum_{u \in \mathcal{U}}\sum_{k \in \mathcal{K}}\epsilon_{u,k} t_{u,k}+\sum_{d \in \mathcal{D}}\sum_{k \in \mathcal{K}}\eta_{d,k} t_{d,k},
\end{split}
\end{equation}
where $\bm{\alpha}=\{\alpha_k\}_{k \in \mathcal{K}},\bm{\beta}=\{\beta_{k}\}_{k \in \mathcal{K}}, \bm{\xi}={\{\xi_{d,k}\}}_{d \in \mathcal{D},k\in \mathcal{K}}, \bm{\epsilon}=\{\epsilon_{u,k}\}_{u \in \mathcal{U}, k \in \mathcal{K}}, \bm{\eta}=\{\eta_{d,k}\}_{d \in \mathcal{D}, k \in \mathcal{K}}$ are the non-negative Lagrangian multipliers.  The dual problem can be written as 
\begin{equation}
\min_{\bm{\alpha}, \bm{\beta}, \bm{\xi}, \bm{\epsilon},\bm{\eta}}D(\bm{\alpha},\bm{\beta},\bm{\xi},\bm{\epsilon},\bm{\eta}),
\end{equation}
where the dual function of problem (\ref{equ:P1}) is denoted as 
\begin{equation}
    D(\bm{\alpha,\beta,\xi,\epsilon,\eta})=\max_{\bm{t_d},\bm{t_u}}L_1(\bm{t_u},\bm{t_d},\bm{\alpha}, \bm{\beta},\bm{\xi}, \bm{\epsilon},\bm{\eta}).
\end{equation}

According to the KKT conditions,  and the corresponding constraints of $P3$, the optimal solution of problem $P3$ 
should satisfy 
\begin{subequations}
\label{equ:Lag_uplink}
\begin{align}
\frac{\partial{L_1}}{\partial{t_{d,k}}}=&(1+\alpha_k)\frac{\partial{R_{d,k}}}{\partial{t_{d,k}}}-\beta_k+\xi_{d,k}\frac{P_{d,k}^{max}}{MCOT}+\eta_{d,k} \notag \\
&=0,\label{subeqn:partial_Rdk}\\
\frac{\partial{L_1}}{\partial{t_{u,k}}}=&(1+\alpha_k)\frac{\partial{R_{u,k}}}{\partial{t_{u,k}}}-\beta_k+\epsilon_{u,k}=0,\label{subeqn:partial_Ruk}
\end{align}
\end{subequations}
where 
\begin{equation}
\begin{split}
 \frac{\partial{R_{d,k}}}{\partial{t_{d,k}}}&=B_kp_k \bigg(\frac{\ln(1+\frac{MCOT \cdot q_{d,k}|{h}_{d,k}|^2}{\sigma^2t_{d,k}})}{\ln2}\\
 &-\frac{MCOT \cdot q_{d,k}|{h}_{d,k}|^2}{(\sigma^2 t_{d,k}+MCOT \cdot q_{d,k}|{h}_{d,k}|^2)\ln2}\bigg).
 \end{split}
\end{equation}
Define function $h(x)$ as
\begin{equation}
h(x)=\frac{\ln(1+x)}{\ln2}-\frac{x}{(\ln2) (1+x)},
\end{equation} 
and the derivative of the uplink and downlink data rate can be written as
\begin{subequations}
\begin{align}
 \frac{\partial{R_{d,k}}}{\partial{t_{d,k}}}&=B_kp_kh(\frac{MCOT \cdot q_{d,k}|{h}_{d,k}|^2}{\sigma^2t_{d,k}})\notag\\
 &=\frac{\beta_k-\xi_{d,k}\frac{P_{d,k}^{max}}{MCOT}+\eta_{d,k}}{1+\alpha_k},\label{subeqn:tdk_derivation}\\
 \frac{\partial{R_{u,k}}}{\partial{t_{u,k}}}&=B_kp_kh(\frac{MCOT \cdot q_{u,k}|{h}_{u,k}|^2}{\sigma^2t_{u,k}})=\frac{\beta_k -\epsilon_{u,k}}{1+\alpha_k}\label{subeqn:tuk_derivation}.
\end{align}
\end{subequations}
It is found that $t_{d,k}=0$ if and only if $q_{d,k}=0$, and $t_{u,k}=0$ if and only if $q_{u,k}=0$.  According to the complementary slackness conditions, $\eta_{d,k}=\epsilon_{u,k}=0$. According to Lemma \ref{lem:tuilun}, it can be found that  $\sum_{d \in \mathcal{D}}t_{d,k}+\sum_{u \in \mathcal{U}}t_{u,k}=MCOT$. 

If $\xi_{d,k}>0$, $q_{d,k}=\frac{t_{d,k}}{MCOT} \cdot P_{d,k}^{max}$, which cannot be equal to the given value, thus $\xi_{d,k}=0$. Therefore, from
(\ref{subeqn:tdk_derivation}) and (\ref{subeqn:tuk_derivation}), we can obtain
\begin{equation}
\label{equ:time_frac_ul_dl}
\frac{\partial{R_{d,k}}}{\partial{t_{d,k}}}=\frac{\partial{R_{u,k}}}{\partial{t_{u,k}}}=\frac{\beta_k}{1+\alpha_k}.
\end{equation}

According to (\ref{equ:time_frac_ul_dl}), we can use the Lambert W function to denote the uplink time duration and downlink time duration as
\begin{equation}
\begin{split}
\label{equ:time_duration}	
&t_{d,k}=-\frac{MCOT\cdot q_{d,k}|{h}_{d,k}|^2}{\sigma^2 (1+\frac{1}{W(-e^{-(\frac{\beta_k \ln2}{(1+\alpha_k)B_kp_k}+1)})})}\bigg|_{\frac{MCOT \cdot q_{d,k}}{P_{d,k}^{max}}},\\
&t_{u,k}=-\frac{MCOT\cdot q_{u,k}|{h}_{u,k}|^2}{\sigma^2 (1+\frac{1}{W(-e^{-(\frac{\beta_k \ln2}{(1+\alpha_k)B_kp_k}+1)})})}\bigg|_{0},
\end{split}
\end{equation}
where $W(.)$ is Lambert W function, $a|_b=max(a,b)$. According to (\ref{equ:equto1}), $\beta_k$ is the solution of 
\begin{equation}
\begin{split}
\label{equ:time_duration_sum}
&\sum_{d \in \mathcal{D}} -\frac{MCOT\cdot q_{d,k}|{h}_{d,k}|^2}{\sigma^2 (1+\frac{1}{W(-e^{-(\frac{\beta_k \ln2}{B_kp_k)}+1)})})}
\bigg|_{\frac{MCOT \cdot q_{d,k}}{P_{d,k}^{max}}}\\
&+\sum_{u \in \mathcal{U}}-\frac{MCOT \cdot q_{u,k}|{h}_{u,k}|^2}{\sigma^2 (1+\frac{1}{W(-e^{-(\frac{\beta_k \ln2}{B_kp_k}+1)})})} \bigg|_{0}=MCOT,
\end{split}
\end{equation}
which can be solved by the bisection method.


Next, we can use the sub-gradient to update the Lagrangian multiplier $\alpha_k$ as
\begin{equation}
\label{equ:alpha_update}
\alpha_k(t+1)=\left[\alpha_k(t)-s_1(t) (\sum_{d \in \mathcal{D}}R_{d,k}+\sum_{u \in \mathcal{U}}R_{u,k}-\frac{\phi r_w}{1+\phi})\right]^{+},
\end{equation}
where $s_1(t)$ is the step size and $\left[x\right]^{+} \triangleq max(0,x)$. 

\subsection{Optimal Power Allocation for Uplink and Downlink Transmission} 

Given the time duration for DL and UL allocation ($\bm{t_{u}, t_{d}}$), the problem $P4$ can be reformulated as 
\begin{equation}
\label{equ:P4}
\begin{split}
P4:
\max_{\bm{q_d},\bm{q_u}} & \sum_{d \in D}\sum_{k \in \mathcal{K}}R_{d,k}+\sum_{u \in U}\sum_{k \in \mathcal{K}}R_{u,k}\\
\textrm{s.t.} 
              & ~~\eqref{equ:Fairness_2}, \eqref{subeqn:ue_aver_power}, \eqref{subeqn:dl_total_power},\eqref{subeqn:dl_powertime},\\
              & q_{d,k}, q_{u,k}\geq 0, \forall u \in \mathcal{U}, d \in \mathcal{D}, k \in \mathcal{K}.
\end{split}
\end{equation}
Similarly, the Lagrangian function of (P4) is given by 
\begin{equation}
\begin{split}
 &L_2(\bm{q_d},\bm{q_u}, \bm{\alpha},\bm{\theta}, \bm{\gamma},\bm{\xi},\bm{\psi},\bm{\omega})=\sum_{d \in D}\sum_{k \in \mathcal{K}}R_{d,k}+\sum_{u \in \mathcal{U}}\sum_{k \in \mathcal{K}}R_{u,k}\\
 &+\sum_{k \in \mathcal{K}}\alpha_{k} (\sum_{d \in \mathcal{D}}R_{d,k}+\sum_{u \in U}R_{u,k}-\frac{\phi r_w}{1+\phi})\\
 &+ \theta (P_{avg}-\frac{1}{U}\sum_{u \in \mathcal{U}}\sum_{k \in K}q_{u,k})+\gamma (P_{gNB}^{max}-\sum_{k \in \mathcal{K}} \sum_{d \in D} q_{d,k})\\
 &+\sum_{k \in \mathcal{K}} \sum_{d \in D} \xi_{d,k}(\frac{t_{d,k}}{MCOT}.P_{d,k}^{max}-q_{d,k})\\
 &+ \sum_{k \in \mathcal{K}} \sum_{u \in U} \psi_{u,k} q_{u,k}+\sum_{k \in \mathcal{K}} \sum_{d \in D} \omega_{d,k} q_{d,k}, 
\end{split}
\end{equation}
where $\bm{\alpha}=\{\alpha_{k}\}_{k\in \mathcal{K}}, \bm{\theta}, \bm{\gamma},\bm{\xi}=\{\xi_{d,k}\}_{d \in \mathcal{D}, k\in \mathcal{K}},\bm{\psi}=\{\psi_{u,k}\}_{u \in \mathcal{U}, k\in \mathcal{K}},\bm{\omega}=\{\omega_{d,k}\}_{d \in \mathcal{D}, k\in \mathcal{K}}$ are non-negative Lagrange multiplier. According to the KKT conditions and the complementary slackness conditions, the optimal solution with fixed ($\bm{q_d},\bm{q_u}$) should satisfy
\begin{subequations}
\label{equ:Lag_downlink}
\begin{align}
&\frac{\partial{L_2}}{\partial{q_{d,k}}}=(1+\alpha_k)\frac{\partial{R_{d,k}}}{\partial{q_{d,k}}}-\gamma-\xi_{d,k}+\omega_{d,k}=0 \label{subeqn:partital_dk_complete},\\
&\frac{\partial{L_2}}{\partial{q_{u,k}}}=(1+\alpha_k)\frac{\partial{R_{u,k}}}{\partial{q_{u,k}}}-\frac{\theta}{U}+\psi_{u,k}=0 \label{subeqn:partital_uk_complete},
\end{align}
\end{subequations}
where 
\begin{subequations}
\begin{align}
\frac{\partial{R_{d,k}}}{q_{d,k}}&=\frac{MCOT \cdot B_kp_k t_{d,k}|{h}_{d,k}|^2}{(\sigma^2t_{d,k}+MCOT\cdot q_{d,k}|{h}_{d,k}|^2) \ln2} \notag \\
&=\frac{\gamma+\xi_{d,k}-\omega_{d,k}}{1+\alpha_k} \label{subeqn:partital_dk},\\
\frac{\partial{R_{u,k}}}{\partial{q_{u,k}}}&=\frac{MCOT \cdot B_k p_k t_{u,k}|{h}_{u,k}|^2}{(\sigma^2t_{u,k}+MCOT \cdot q_{u,k}|{h}_{u,k}|^2) \ln2}=\frac{\frac{\theta}{U}-\psi_{u,k}}{1+\alpha_k}\label{subeqn:partital_uk}.
\end{align}
\end{subequations}

According to the complementary slackness conditions, we can find that $q_{d,k}=0$ and $q_{u,k}=0$ if and only if $t_{d,k}=0$ and $t_{u,k}=0$ for ($u \in \mathcal{U}, d \in \mathcal{D}, k \in \mathcal{K}$). As we assume $t_{d,k}$ and $t_{u,k}$ are given, $\psi_{u,k}=\omega_{d,k}=0$ can be held. From (\ref{subeqn:partital_uk}),  if $\theta=0$, $(1+\alpha_k)\frac{\partial{R_{u,k}}}{\partial{q_{u,k}}}=0$, that is,  $q_{u,k}=0$ and $t_{u,k}=0$, which makes the fraction $\frac{\partial{R_{u,k}}}{\partial{q_{u,k}}}$ meaningless; According to the complementary slackness conditions, $\theta > 0$ and $P_{avg}-\frac{1}{U}\sum_{u \in \mathcal{U}}\sum_{k \in K}q_{u,k}=0$. Furthermore, $\xi_{d,k}=0$ similar to the previous assumption. 
According to \eqref{subeqn:partital_uk} and $P_{avg}=\frac{1}{U}\sum_{u \in \mathcal{U}}\sum_{k \in K}q_{u,k}=0$, we can obtain 
\begin{equation}
\label{equ:up_power_q'}
\begin{split}
&q_{u,k}=t_{u,k}(\frac{B_kp_k(1+\alpha_k)U}{\theta \ln2}-\frac{\sigma^2}{MCOT \cdot |{h}_{u,k}|^2})\bigg|_{0}, 
\end{split}
\end{equation}
where Lagrangian multiplier $\theta$ is written as
\begin{equation}
\theta=\frac{U \cdot \sum_{k \in \mathcal{K}} \sum_{u \in \mathcal{U}} t_{u,k}B_kp_k(1+\alpha_k)}{\ln2 (\frac{\sigma^2}{MCOT}\sum_{k \in \mathcal{K}} \sum_{u \in \mathcal{U}} t_{u,k}/{h}_{u,k}^2+ U \cdot P_{avg})}.
\end{equation}

According to \eqref{subeqn:partital_dk} and the complementary slackness conditions, if $\gamma+\xi_{d,k}=0$, i.e., $\gamma=\xi_{d,k}=0$, then $\frac{\partial{R_{d,k}}}{\partial{q_{d,k}}}=0$, $t_{d,k}=0$ and $q_{d,k}=0$, which makes $\frac{\partial{R_{d,k}}}{\partial{q_{d,k}}}$ meaningless. If $\gamma+\xi_{d,k}>0$, we can obtain $q_{d,k}$ as follows.

\begin{itemize}

\item If $\xi_{d,k}>0$ and $\gamma \geq 0$, according to (\ref{subeqn:partital_dk}) and $\frac{t_{d,k}}{MCOT}.P_{d,k}^{max}=q_{d,k}$, we have $q_{d,k}$ as
\begin{equation}
\label{equ:xilarger0}
q_{d,k}={t_{d,k}}(\frac{B_kp_k (1+\alpha_k)}{(\gamma+\xi_{d,k}) \ln2}-\frac{\sigma^2}{MCOT \cdot |{h}_{d,k}|^2})\bigg|_{0}^{\frac{t_{d,k}}{MCOT}.P_{d,k}^{max}},
\end{equation}
where the Lagrange multiplier $\gamma+\xi_{d,k}$ is written as
\begin{equation}
\label{equ:rplusesi}
\gamma+\xi_{d,k}= \frac{B_kp_k (1+\alpha_k)}{\ln2 (\frac{\sigma^2}{MCOT \cdot |h_{d,k}^2|}+\frac{P_{d,k}^{max}}{MCOT})},
\end{equation}
and $a\big|_{b}^{c}=min(max(a,b),c)$.

\item If $\xi_{d,k}=0$ and $\gamma>0$, then $\frac{\partial R_{d,k}}{\partial q_{d,k}}=\frac{\gamma}{1+\alpha_k}$ and $P_{gNB}^{max}=\sum_{k \in \mathcal{K}} \sum_{d \in D} q_{d,k}$, and thus we can obtain 
\begin{equation}
\label{equ:gammalarger0}
\begin{split}
 q_{d,k}=t_{d,k}(\frac{B_k p_k (1+\alpha_k)}{\gamma \ln2}-\frac{\sigma^2}{MCOT \cdot |{h}_{d,k}|^2})\bigg|_{0}^{\frac{t_{d,k}}{MCOT}.P_{d,k}^{max}}, \\
 \end{split}
\end{equation}
where Lagrange multiplier $\gamma$ can be written as
\begin{equation}
\label{equ:gamma}
 \gamma=\frac{\sum_{k \in \mathcal{K}}\sum_{d \in \mathcal{D}}t_{d,k}B_k p_k(1+\alpha_k)}{\ln2(\frac{\sigma^2}{MCOT} \sum_{k\in \mathcal{K}}\sum_{d \in \mathcal{D}} t_{d,k}/{h}_{d,k}^2+P_{gNB}^{max})}.\\
 \end{equation}
 \end{itemize}
 
The download power allocation (multiplied by time) $q_{d,k}$ for gNB is summarized as equation \eqref{equ:qdk}.
\begin{figure*}
\begin{align}
\label{equ:qdk}
q_{d,k}=
 \begin{cases}
      t_{d,k}(\frac{B_k p_k (1+\alpha_k)}{(\gamma+\xi_{d,k}) \ln2}-\frac{\sigma^2}{MCOT \cdot |{h}_{d,k}|^2}) \bigg|_{0}^{\frac{t_{d,k}}{MCOT}.P_{d,k}^{max}} \rm{with}~\eqref{equ:rplusesi}, & \xi_{d,k}>0~and~\gamma\geq 0,\\
      t_{d,k}(\frac{B_kp_k (1+\alpha_k)}{\gamma \ln2}-\frac{\sigma^2}{MCOT \cdot |{h}_{d,k}|^2})\bigg|_{0}^{\frac{t_{d,k}}{MCOT}.P_{d,k}^{max}} \rm{with}~\eqref{equ:gamma}, &\xi_{d,k}=0~ and ~\gamma>0.
 \end{cases}
 \end{align}
\end{figure*}

Lastly, the Lagrangian multiplier is updated by sub-gradient as
\begin{equation}
\label{equ:theta_xi_gamma_update}
\begin{split}
&\xi_{d,k}(t+1)=\left[\xi_{d,k}(t)-s_2(t)(\frac{t_{d,k}}{MCOT} \cdot P_{d,k}^{max}-q_{d,k})\right]^{+},\\
&\gamma(t+1)=\left[\gamma(t)-s_3(t)(P_{gNB}^{max}-\sum_{k \in \mathcal{K}} \sum_{d \in D} q_{d,k})\right]^{+},\\
\end{split}
\end{equation}
where $s_i(t), i=1,2,3$, are the step sizes subject to
\begin{equation}
\sum_{t=1}^{\infty} s_i(t)^{2} < \infty \ \mbox{and} \ \sum_{t=1}^{\infty} s_i(t)=\infty, i=1,2,3.
\end{equation}
The overall algorithm is summarized in Algorithm \ref{Alg:Algorithm}.

\begin{algorithm}[!tbp]
\caption{Resource Allocation Algorithm for UL and DL on Unlicensed Channels}  
\label{Alg:Algorithm}
\DontPrintSemicolon
\SetAlgoLined
\tcc{Unlicensed Access Procedure}
Initialize $k=1,\bm{\alpha},\gamma, \bm{\xi}$, set $s_i(i=1,2,3)$;\\
\Repeat{ $k > K$}{
Calculate the WiFi access probability $\tau_{k}^{w}$ and gNB access probability $\tau_{k}^{l}$ according to (\ref{equ:WiFi_trans_proba}) and (\ref{equ:tau_u_k}); \\
Calculate the  average time slot $\overline{T_{k}^{slot}}$ and WiFi throughput $R_{k}^{W}$ according to $(\ref{equ:T_slot})$ and (\ref{equ:R_WiFi}), respectively; \\
Calculate the $p_k=\frac{\tau_k^l(1-\tau_k^w)^{N_k}}{\overline{T_{k}^{slot}}}$;\\
\tcc{Resource Allocation on unlicensed channel}
\Repeat{The objective function of (\ref{equ:P2}) converges}{
   With fixed $\{q_{u,k}\}_{u \in \mathcal{U}}, \{q_{d,k}\}_{d \in \mathcal{D}}$, obtain the $\beta_k$ according to (\ref{equ:time_duration_sum}), and then obtain the optimal value $\{t_{u,k}, t_{d,k}\}_{u \in \mathcal{U}, d \in \mathcal{D}}$ according to (\ref{equ:time_duration});\\
   Obtain the optimal value $\{q_{u,k}\}_{u \in \mathcal{U}}$ according to ($\ref{equ:up_power_q'}$), and obtain the optimal $\{q_{d,k}\}_{d \in \mathcal{D}}$ according to (\ref{equ:qdk}) with fixed $\{t_{u,k},t_{d,k}\}_{u \in \mathcal{U}, d \in \mathcal{D}}$;\\
    }
   Update the  Lagrangian multiplier $\alpha_k$ as (\ref{equ:alpha_update});\\
   Update the Lagrangian multiplier $\xi_{d,k}$ and $\gamma$ as (\ref{equ:theta_xi_gamma_update});\\
    $k=k+1$;\\
    }
    \end{algorithm}

\section{Performance Analysis, Simulation and Discussion}
\label{sec:simulation}

In this section, we first compare the successful access probability and airtime ratio of our proposed method with the other two methods. Then the performance of the proposed algorithm under the influence of the payload, the maximum downlink power, the length of MCOT, and the number of WiFi nodes is evaluated.

\subsection{System Parameter Setting}

\begin{table*}
\caption{Simulation parameters.}
\label{tab:tab3}
\begin{center}
\begin{tabular}{|l|l|l|l|}
  \hline
  \multicolumn{2}{|c|}{WiFi System (IEEE 802.11n)}& \multicolumn{2}{c|}{NR-U System (sub 7GHz)}\\
  \hline
 Parameter &Value &Parameter &Value\\
  \hline
  H &400 bits &MCOT & 8-10~ms\\
  ACK &  364 bits & U & 5\\
  SIFS &  16 $\mu$s  &D & 5\\
  PIFS & 25 $\mu$s & Subcarrier Spacing (SCS) & 60 KHz\\
  DIFS &  34 $\mu$s  &$T_f$ & 16 $\mu$s \\
  $r_w$ & 54 Mbps  &$T_d$ & $T_f+m_p*T_{\sigma}$\\
  RTS & 288 bits   &$N_u$  &  10\\
  CTS & 352 bits   &$L$   & 8 \\
  $\mathbb{E}(PL_k)$ & 800-2048 Bytes  &$P_{gNB}^{max}$ & {35 dBm}\\
  $T_{\sigma}$ & 9 $\mu$s  &$P_{dk}^{max}$ & 23-35 dBm\\
  $\delta$ & 0.1 $\mu$s  & $P_{avg}$ & 23 dBm\\
  Carrier Bandwidth $B_k$  &20 MHz  & $\bar{d}$ (distance between user and gNB) &10-2000 m\\
  $W_w$& 16 &  $m_l$ & 6 \\
  $m_w$& 6 & $T_{gNB}$ & 0.25 ms \\ 
  $K$& 6 & $N_k$  & 5-30 \\ 
 \hline
\end{tabular}
\end{center}
\end{table*}

The simulation parameters of WiFi and NR-U (sub 7 GHz) are presented in Table~\ref{tab:tab3}. The DL and UL channels in NR-U experience Rayleigh fading and the UMi-street Canyon path loss model is adopted \cite{3GPP/38901} as given by   
\begin{equation}
PL=
\begin{cases}
&32.4+21\log_{10}(\bar{d})+20\log_{10}(f_c), 10~\mbox{m}\leq \bar{d}\leq d_{BP}^{'},\\
&32.4+40\log_{10}(\bar{d})+20\log_{10}(f_c)-9.5\log_{10}((d_{BP}^{'})^2\\
&+(h_{BS}-h_{UT})^2), d_{BP}^{'}<\bar{d}\leq 5~\mbox{km},
\end{cases}
\end{equation}
where $h_{BB}^{'}=h_{BS}-h_E$, $h_{UT}^{'}=h_{UT}-h_E$, $h_E=1~$m, $h_{BS}=10~$m, $h_{UT}=1.5~$m, $f_c=5~$GHz,  
$c=3*10^8~$m/s, $d_{BP}^{'}=4h_{BB}^{'}h_{UT}^{'}f_c/c=300$ m, and $\bar{d}$ is the distance between UE and gNB uniformly chosen from 10 to 2000 m. 
The noise power density is $N_0=-174~$dBm/Hz, and the noise power is $-174+10 \log_{10}(B_k)~$dBm.  The simulation is run 100 times, and the result is the average of all runs. Without loss of generality, we assume that there is one WiFi network and hence one unlicensed channel, the number of WiFi nodes varies from 5 to 30, increasing by 5 in each scenario. 

\subsection{Access Probability and Airtime Ratio}

\begin{figure*}[!tb]
\centering
\subfloat[Successful access probability.]{
    \centering
     \includegraphics[width=0.38\textwidth]{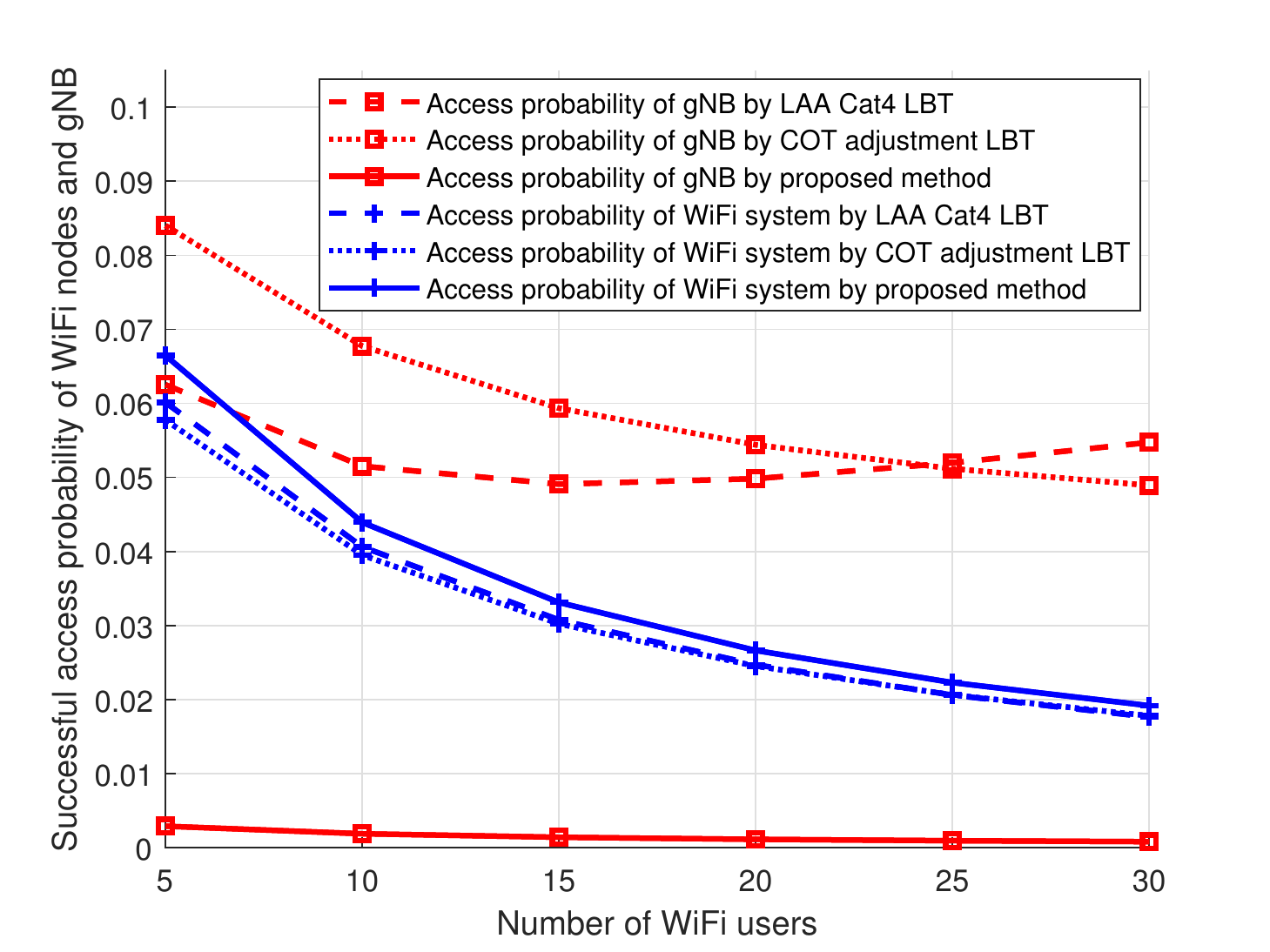}
    \label{fig:access}}
\subfloat[Airtime ratio.]{
    \includegraphics[width=0.38\textwidth]{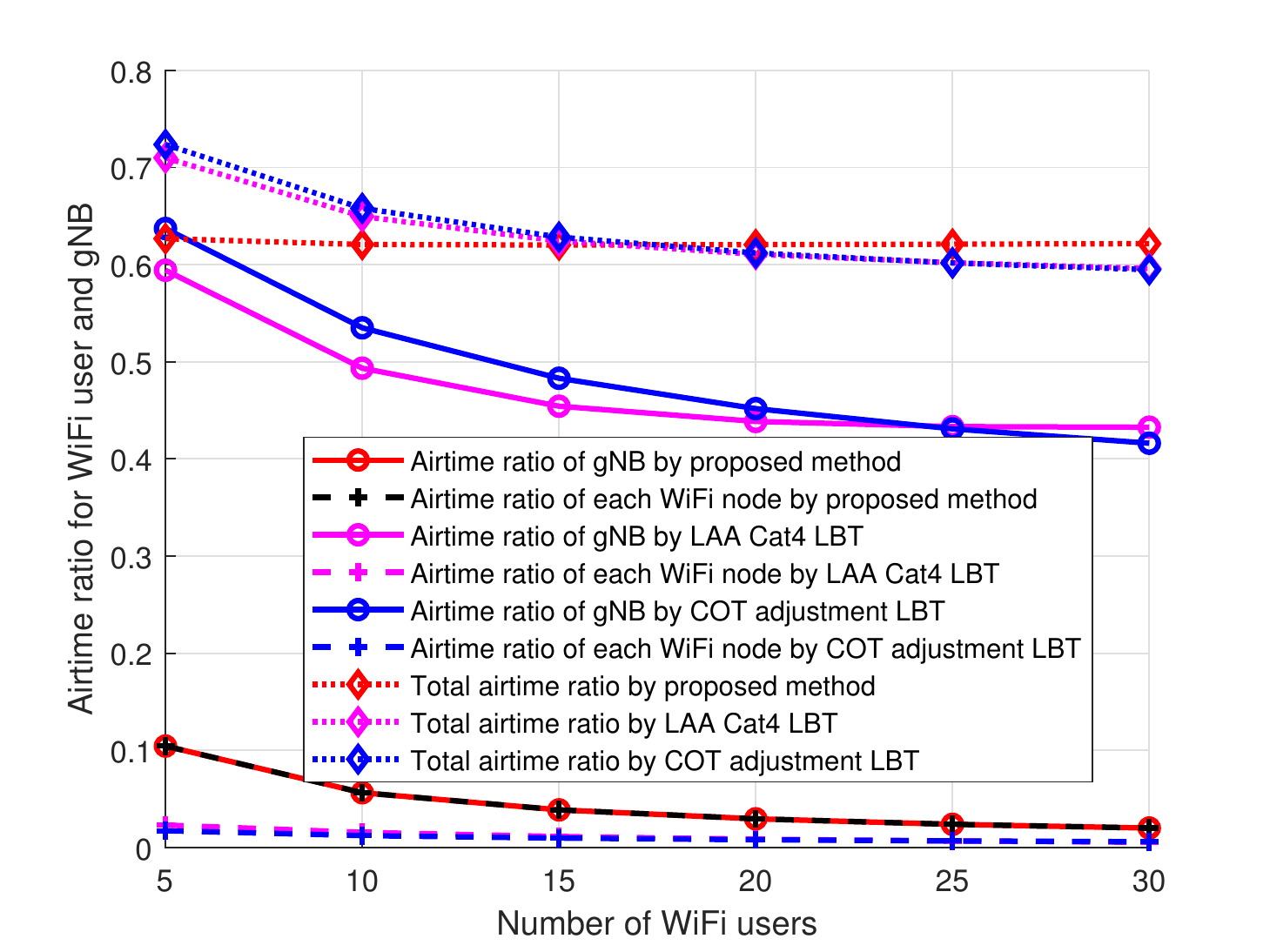}
    \label{fig:airtime}}
\caption{Successful access probability and airtime ratio for NR and WiFi system.} 
\label{fig:main1}
\end{figure*}

In this simulation, we evaluate and compare Cat4 LBT in \cite{Pei/2018TVT} (Category-4 LBT with the initial contention window size 16), and COT adjustment LBT \cite{wang/TVT19} with our proposed method in terms of successful access probability, airtime ratio  as shown in Fig.~(\ref{fig:access}) and Fig.~(\ref{fig:airtime}). In Fig.~(\ref{fig:access}), we can find that almost in all methods the successful access probability of gNB and WiFi decreases with the increase of the number of WiFi nodes, as expected. Because when more WiFi nodes compete for unlicensed channels, the successful access probability for each node will decrease. It is also observed that the access probability of gNB in our proposed setting is smaller than that of Cat4 LBT  and COT adjustment LBT, while the successful access probability for WiFi is slightly improved than the other two methods. The reason gNB has a much smaller access probability than WiFi is that gNB usually has a much larger channel occupation time than WiFi and to ensure an equal airtime ratio per WiFi node to that of gNB, the optimized initial window size of gNB tends to be large to protect the WiFi system.

In Fig.~(\ref{fig:airtime}), we compared the airtime ratio per node (successful transmission time ratio) with different methods. Our proposed method can achieve the equal airtime ratio per node when gNB adopts the optimal initial contention window $W_l^{*}$, which gives the two systems an equal chance to access the unlicensed channel. Note that when the number of WiFi nodes is large, the airtime ratio per WiFi node (lower) and the airtime ratio of the gNB (higher) can be tuned to balance the coexistence, achieving proportional fairness. When gNB adopts the Cat4 LBT and COT adjustment LBT, we find that the airtime ratio of gNB is larger than that of each WiFi node for these two methods, which make the WiFi nodes have little chance to access the unlicensed channel.  On the other hand, the total airtime ratio for all nodes of Cat4 LBT and  COT adjustment LBT decrease with the increase of the number of WiFi nodes, while the proposed method keeps almost constant since the proposed method can adjust the initial contention window size according to the number of WiFi nodes to keep the total airtime ratio constant but the airtime ratio of each node will decrease with the number of WiFi nodes.

\begin{figure*}[!tb]
\centering
\subfloat[Fairness with different downlink power.]{
    \includegraphics[width=0.35\textwidth]{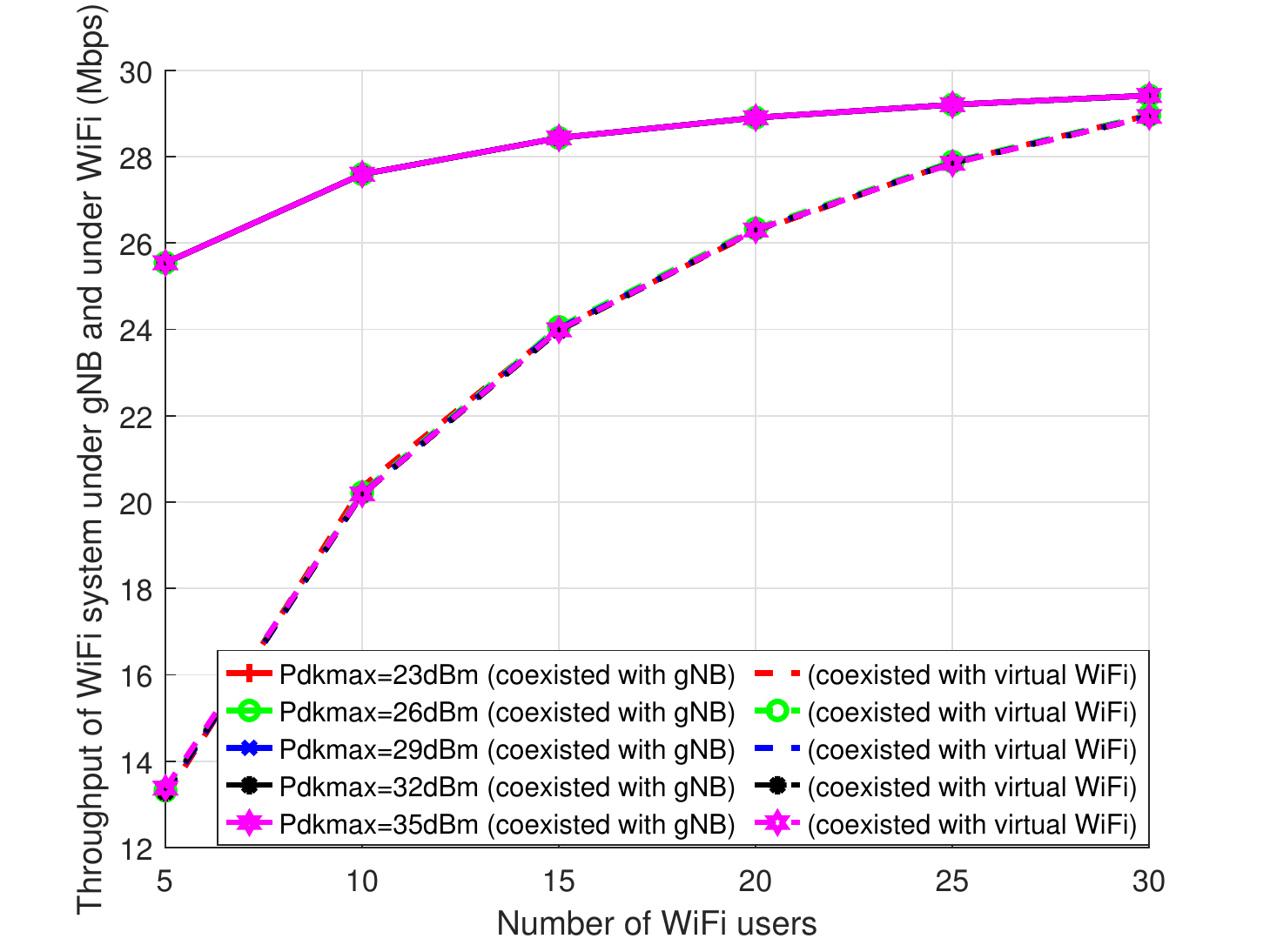}
    \label{fig:fair_power}
    }
\subfloat[Fairness with different WiFi payloads.]{
    \includegraphics[width=0.35\textwidth]{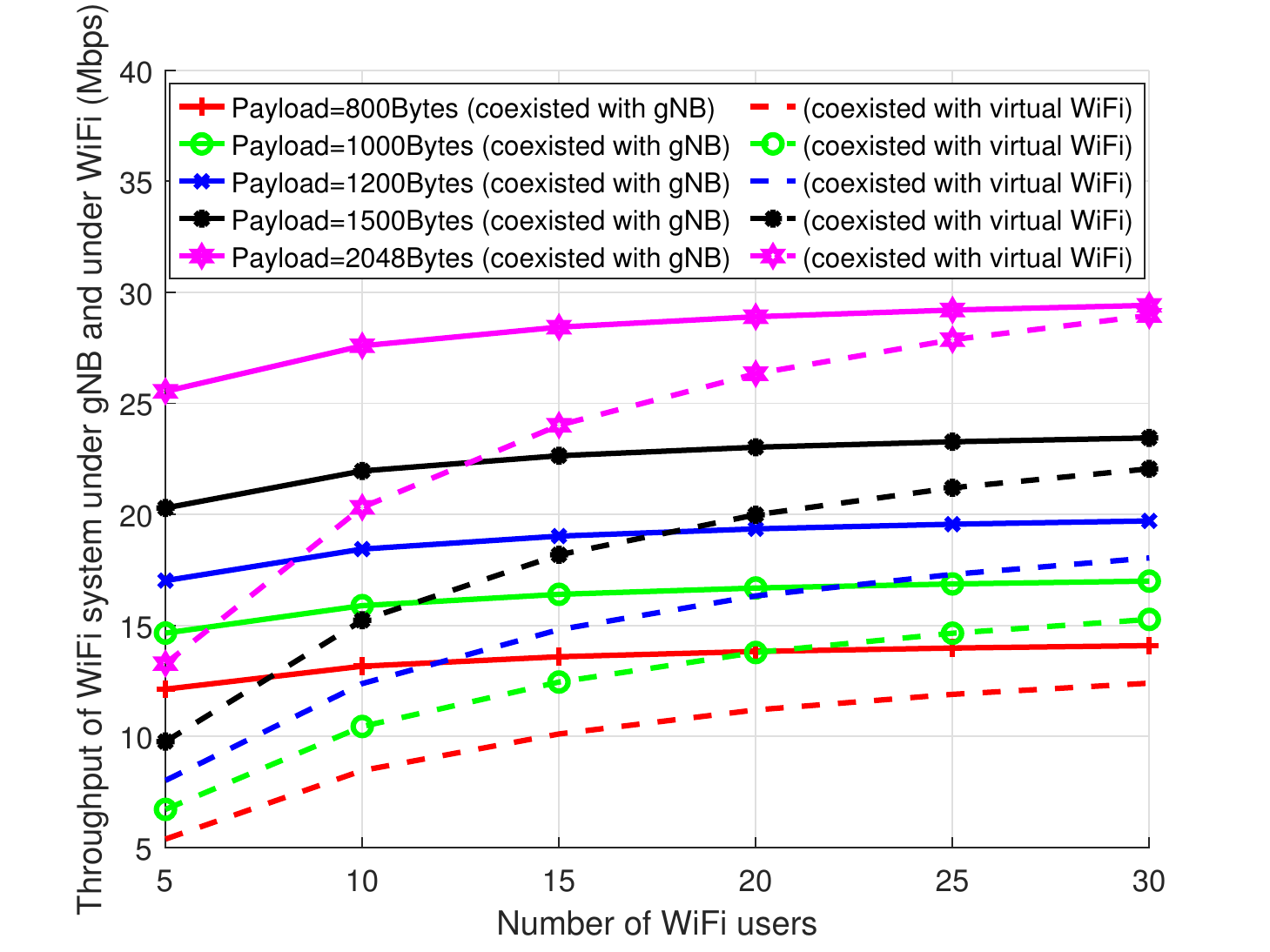}
    \label{fig:fair_payload}
}
\\
\subfloat[Fairness with different MCOTs.]{
    \includegraphics[width=0.35\textwidth]{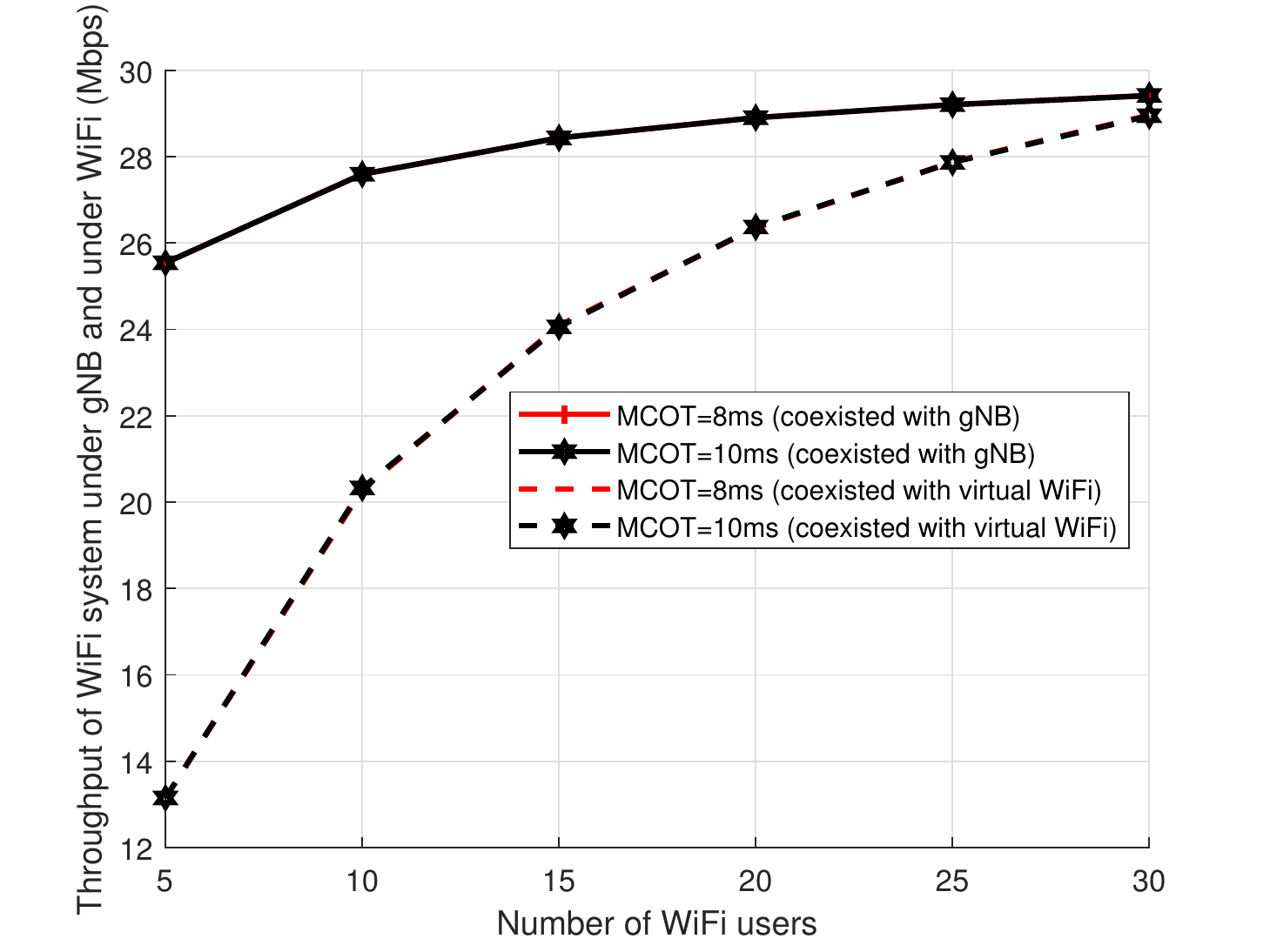}
    \label{fig:fair_mcot}
}
\subfloat[WiFi throughput with different access methods.]{
    \includegraphics[width=0.35\textwidth]{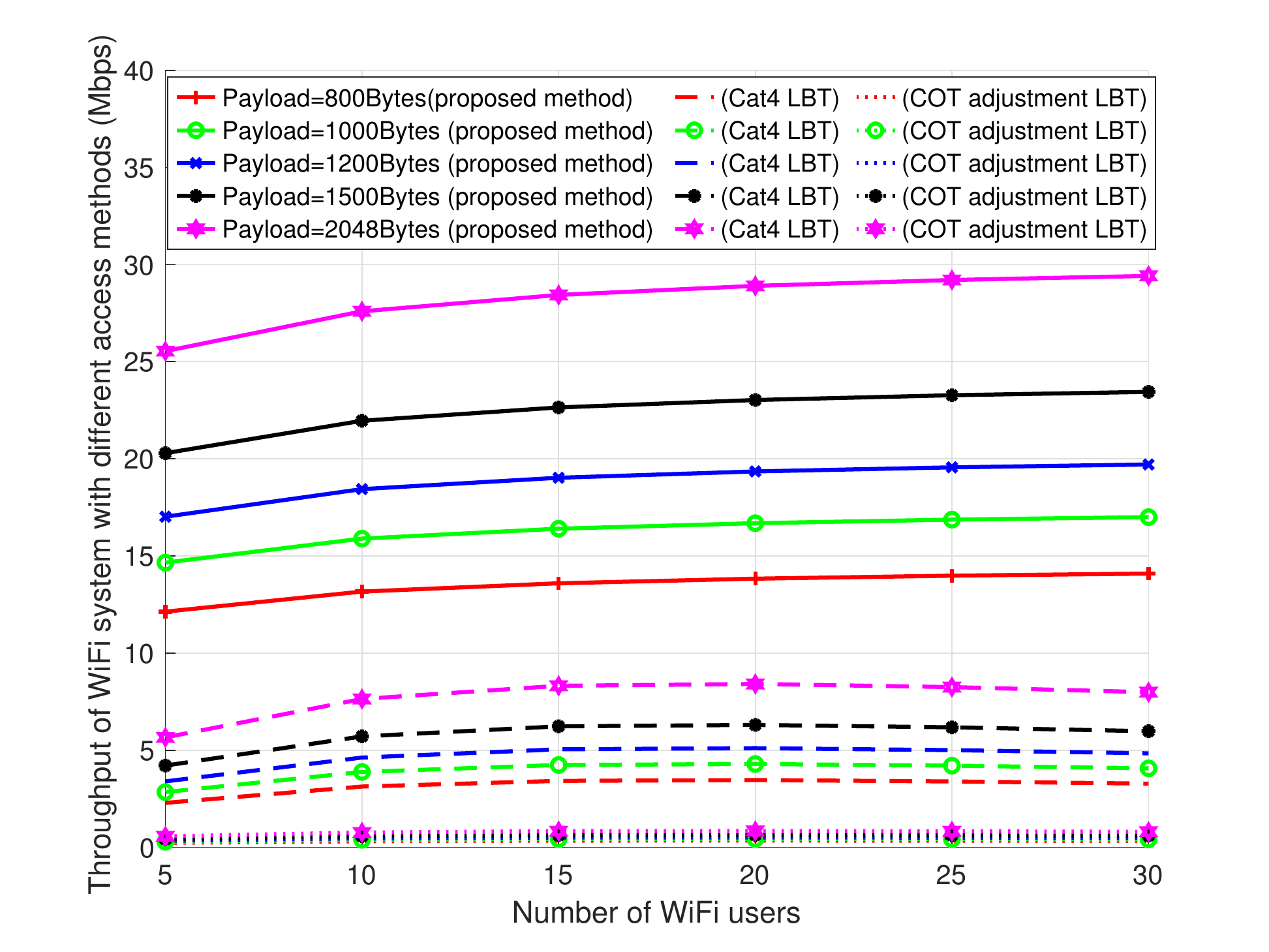}
    \label{fig:fair_method}}
\caption{Throughput fairness and WiFi throughput with different methods.} 
 \label{fig:fairness_powerpload}
\end{figure*}

\subsection{Throughput Fairness and WiFi Throughput with Different Methods}
In Fig.~(\ref{fig:fair_power}), Fig.~(\ref{fig:fair_payload}), Fig.~(\ref{fig:fair_mcot}) we compare and validate the fairness under different maximum downlink power, WiFi payload, and MCOT. It can be easily that our proposed method can obtain throughput fairness under different power, WiFi payload, MOCT. It is also observed that the maximum downlink power and MCOT have little influence on the throughput of the WiFi system, whether coexisting with a gNB system or a virtual WiFi system, while they are deeply influenced by the WiFi payload. Since gNB obtains a small successful access probability in our proposed method, and thus, the successful and failed access time $MCOT+T_{gNB}$ have little impact on the WiFi throughput coexisted with NR (under gNB) or coexisted with a virtual WiFi network (under WiFi). The maximum downlink power has no influence on WiFi throughput under gNB  and has little impact on the WiFi system under a virtual WiFi system. From Fig.~(\ref{fig:fair_payload}), we can find that the WiFi throughput under gNB and under WiFi will increase with the increasing of WiFi payload, and the fairness can be always satisfied with different payloads. In Fig.~(\ref{fig:fair_method}),  we compare the WiFi throughput coexisted with NR with Cat4 LBT \cite{Pei/2018TVT} and COT adjustment LBT \cite{wang/TVT19}. It is observed that the WiFi throughput coexisted with NR in the proposed method is larger than that of Cat4 LBT and COT adjustment LBT whatever the payload is.

\subsection{NR Throughput under Affected Parameters with Different Methods }
In this simulation, we compare the NR throughput by the proposed method with equal time allocation and equal power allocation (ETEP)  \cite{wcnc/SunSMM17}, equal time allocation and optimal power allocation (ETOP) \cite{wcnc/SunSMM17}, and optimal time allocation and equal power allocation (OTEP).

\subsubsection{Impact of  Maximum Downlink Power} Fig.~(\ref{fig:nr_power}) compares the total NR throughput of the proposed method with ETEP, ETOP, OTEP methods under different maximum downlink power of gNB. It is easily observed that the proposed method can achieve a larger throughput than other methods. Furthermore,  the larger the maximum downlink power is, the larger the total NR throughput is. The methods, OTEP and ETEP, achieve the lower throughput than our proposed method and ETOP, since these two methods adopt average equal power $p_{dk}=\frac{1}{2}P_{d,k}^{max}$. It means that power allocation has more influence on NR throughout than time allocation.

\subsubsection{Impact of the WiFi Payloads}

Fig.~(\ref{fig:nr_payload}) shows the NR throughput with different methods and different WiFi payloads, where the maximum downlink power is set as $P_{d,k}^{max}=23$~dBm. We can find that our proposed method can achieve the largest throughput compared to ETOP, OTEP, and ETEP. Furthermore, the larger payload of the WiFi system will yield slightly lower NR throughput, since more WiFi payload means WiFi will occupy the unlicensed channel for a longer time to transmit data, and gNB will be given less time to transmit on the unlicensed channel. Besides, the throughput of NR decreases with the increase of the number of WiFi nodes, since the successful access probability of gNB will become small when there are more WiFi nodes as shown in Fig.~(\ref{fig:access}).

\subsubsection{Impact of Length of the MCOT}
From Fig.~(\ref{fig:nr_mcot}), we can find that the total NR throughput of the proposed method is larger than that of other methods. The method ETEP achieves the lowest NR throughput since it adopts the average power and time allocation, which cannot guarantee the maximum NR throughput. The throughput of OTEP and ETOP is larger than that of ETEP, which means that the influence of time allocation on the NR throughput is smaller than that of power allocation. Besides, it is also observed that the larger the MCOT is, the larger the total NR throughput is, as expected. Since more time for uplink and downlink transmission is the larger throughput for the NR system.

\begin{figure*}[!tb]
\centering
\subfloat[Different maximum downlink power.]{
    \includegraphics[width=0.32\textwidth]{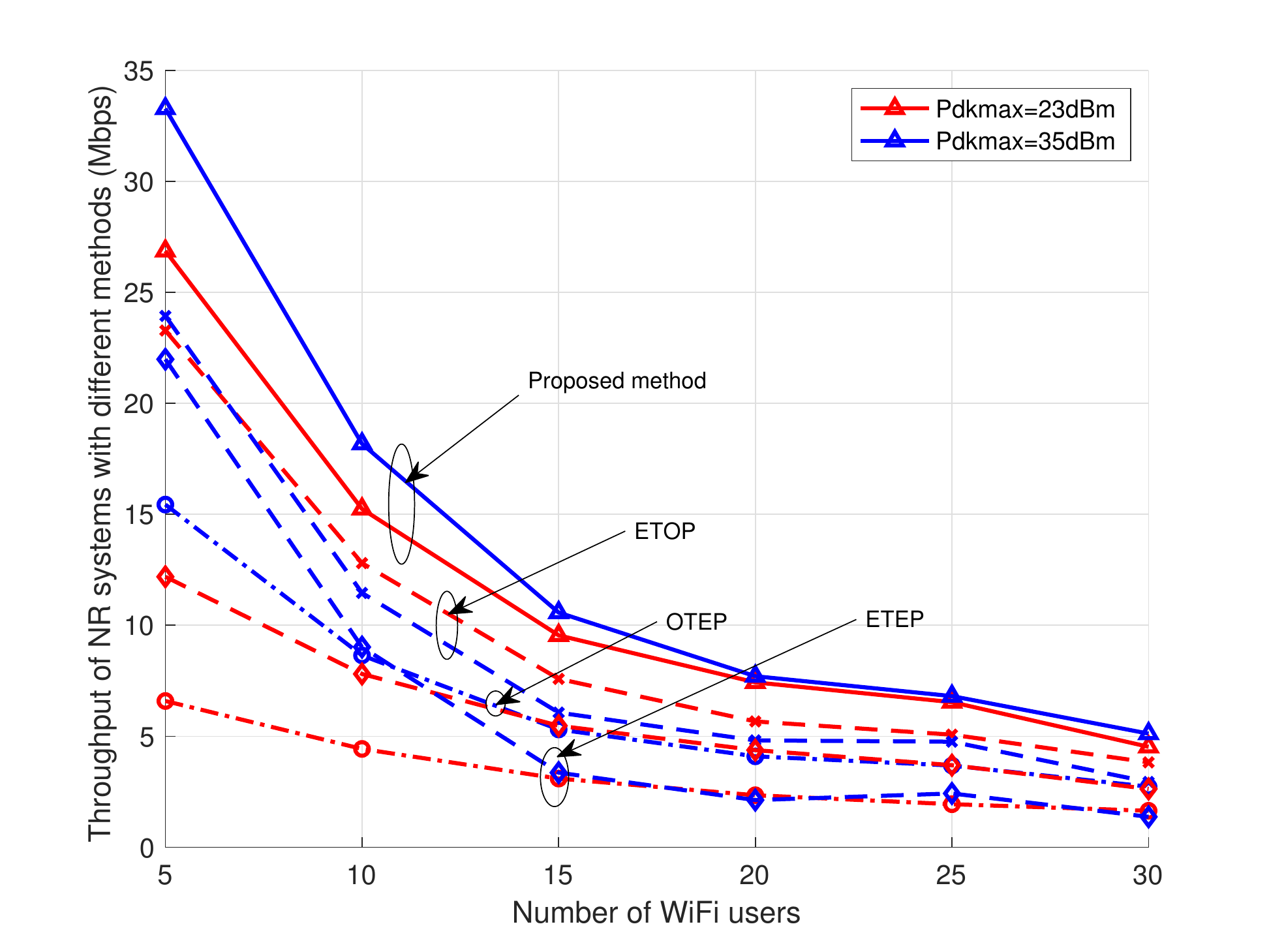}
    \label{fig:nr_power}
}
 \subfloat[Different WiFi payload.]{
    \includegraphics[width=0.32\textwidth]{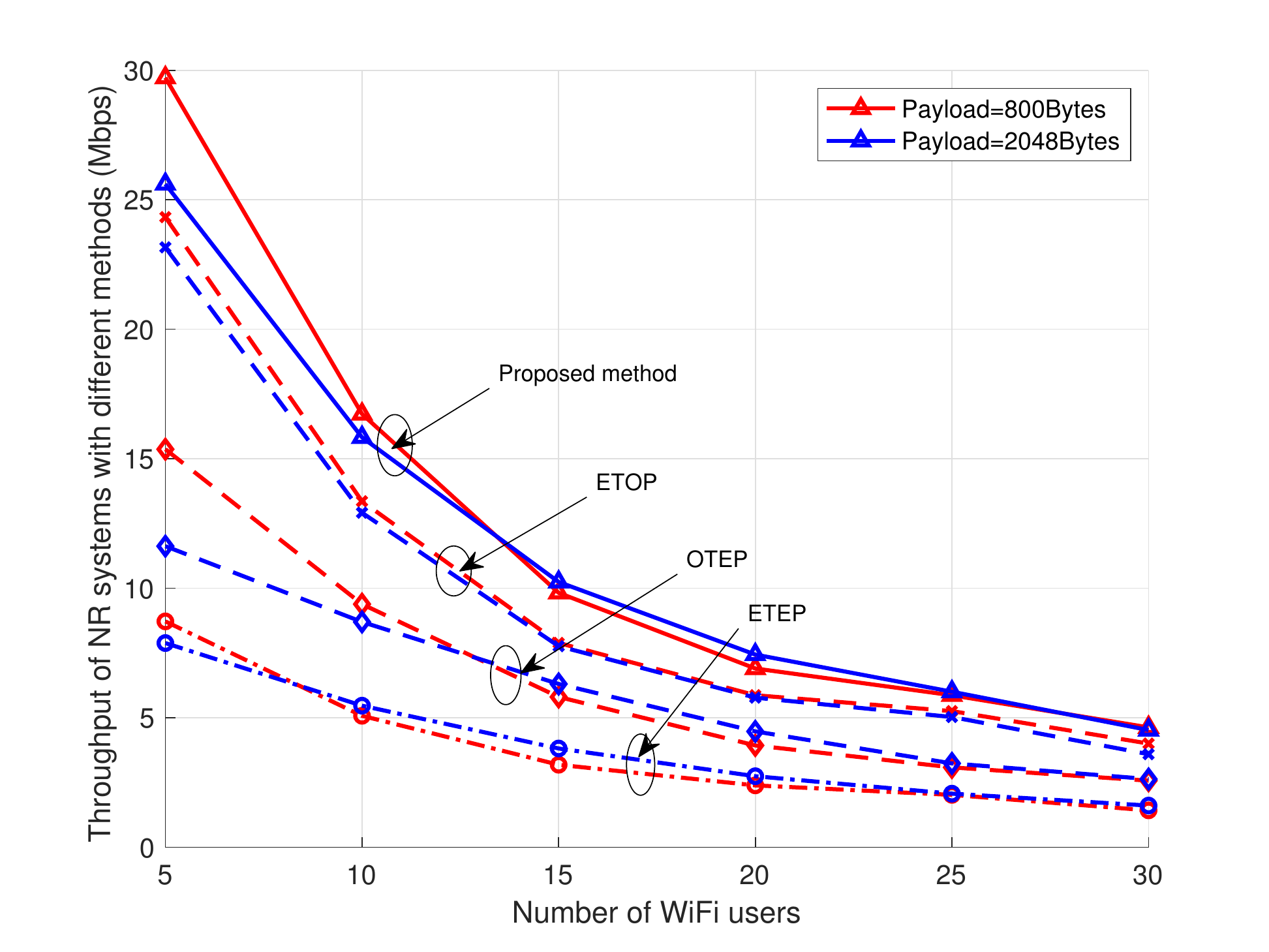}
     \label{fig:nr_payload}
 } 
\subfloat[Different MCOT.]{
    \includegraphics[width=0.32\textwidth]{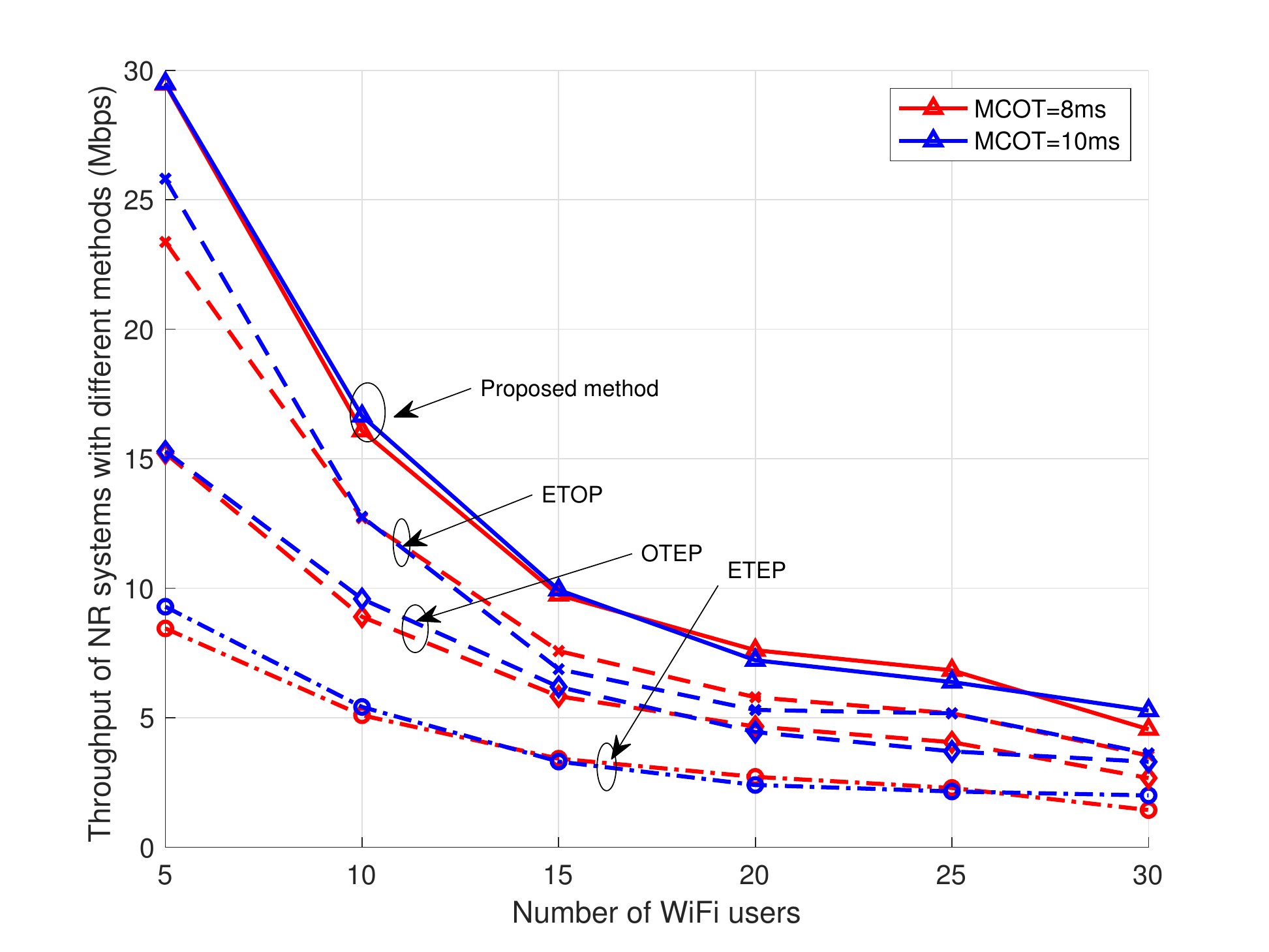}
     \label{fig:nr_mcot}
     }
\caption{Throughput of NR system under affected parameters with different methods.} 
 \label{fig:nr_compare}
\end{figure*}

\section{Conclusion}
\label{sec:conclusion}
 In this paper, we have considered coexistence between NR-U and WiFi systems under 7 GHz, and proposed a coexistence model on unlicensed channels where the MCOT of gNB is divided into two parts, one for uplink transmission and the other for downlink transmission. Our proposed equal airtime access method can make WiFi nodes and gNB obtain fair access opportunities. Furthermore, the proposed method can realize throughput fairness with different WiFi payloads, maximum downlink power and MCOTs, and achieve the largest WiFi throughput under gNB compared to Cat4 LBT and COT adjustment LBT. The optimization of time and power allocation has demonstrated superior performance over the method ETEP, ETOP, OTEP. We also find that the larger maximum downlink power, smaller WiFi payload, and larger MCOT can improve the NR throughput. 

For 5G use of unlicensed bands at higher frequencies, NR-U will need to coexist with WiGig at mmWave. The current unlicensed channel access method for mmWave unlicensed coexistence may not work well, because beamforming is necessary for directional transmission in high path loss mmWave channels, which increases the chance of coexistence in a spatial domain. In this case, directional LBT should be considered at the transmitter, or hybrid with omni-directional LBT. There are a number of studies in the literature on the coexistence of NR-U and 802.11ad. The receiver-assisted LBT, that is, listen before received (LBR), is an auxiliary method to improve the access performance, and the combination of transmitter LBT and receiver LBR  usually can provide significant enhancements in interference management. All these potential unlicensed access methods provide a path for our future research on NR-U and WiGig coexistence. We will extend the proposed methodology to the coexistence study at mmWave bands as future work.

\section{Appendix}
\label{sec:appendix}
\appendices%
\section{Convexity proof of the formulation}
\begin{proof}
It is  well known that the Shannon formula $f(q_{d,k})=\log_2(1+\frac{MCOT \cdot q_{d,k}|{h}_{d,k}|^2}{\sigma^2})$ is concave with respect to (w.r.t.)  $q_{d,k}$, 
and $t_{d,k}f(q_{d,k})=t_{d,k}\log_2(1+\frac{MCOT \cdot q_{d,k}|{h}_{d,k}|^2}{t_{d,k}\sigma^2})$ is also concave w.r.t $(t_{d,k},q_{d,k})$ as stated in \cite{boyd/2004convex}. Thus, the first part of the objective function ($R_{d,k}=p_k t_{d,k}f(q_{d,k})$) in (\ref{equ:P2}) is also concave as the coefficient of the sum is positive w.r.t. $(\bm{t_{k}}, \bm{q})$. Similarly, we can find that 
$t_{u,k}\log_2(1+\frac{MCOT \cdot q_{u,k}|{h}_{u,k}|^2}{\sigma^2t_{u,k}})$ is also concave w.r.t. $(t_{u,k},q_{u,k})$, and thus the second part of the objective function, $R_{u,k}=p_k t_{u,k} f(q_{u,k})$, is also a concave function. Since the objective function is a  concave function  and all the constraints are affine, the maximum-concave problem is a convex problem.
\end{proof}
\section{Proof of maximum value for the second constraint}
\begin{proof}
Define
\begin{equation}
U_{i,k}=\left\{
\begin{array}{rcl}
  B_k p_k  \log_2(1+\frac{MCOT \cdot q_{d,k}|{h}_{d,k}|^2}{\sigma^2 t_{d,k}}) &,i \in 
  \mathcal{D}.\\
  B_k p_k \log_2(1+\frac{MCOT \cdot q_{u,k}|{h}_{u,k}|^2}{\sigma^2 t_{u,k}}) &, i \in 
  \mathcal{U}.\\
\end{array} \right.
\end{equation}
Thus the objective function can rewritten as 
\begin{equation}
\label{equ:equal-constraint}
\begin{split}
&\sum_{k \in \mathcal{K}} ({\sum_{i \in \mathcal{D}}t_{i,k}U_{i,k} +\sum_{i \in \mathcal{U}} t_{i,k}U_{i,k}})=\sum_{k \in \mathcal{K}} \sum_{i \in \mathcal{U}\cup \mathcal{D}} t_{i,k}U_{i,k}.
\end{split}
\end{equation}
As can be seen from (\ref{equ:equal-constraint}), the objective function is an increasing function with $t_{i,k},i \in \mathcal{U} \cup \mathcal{D}$. To maximize the objective function, the equation of the second constraint of problem (\ref{equ:P2}) should be held, i.e.,
\begin{equation}
\label{equ:equto1}
         \sum_{i \in \mathcal{D} \cup {U}} t_{i,k} =\sum_{d \in \mathcal{D}}t_{d,k}+\sum_{u \in \mathcal{U}}t_{u,k} = MCOT, \forall k \in \mathcal{K}.
\end{equation}
\end{proof}

\bibliographystyle{IEEEtran}
\bibliography{IEEEabrv,FINAL}

\begin{IEEEbiography}[{\includegraphics[width=1in,height=1.25in,clip,keepaspectratio]{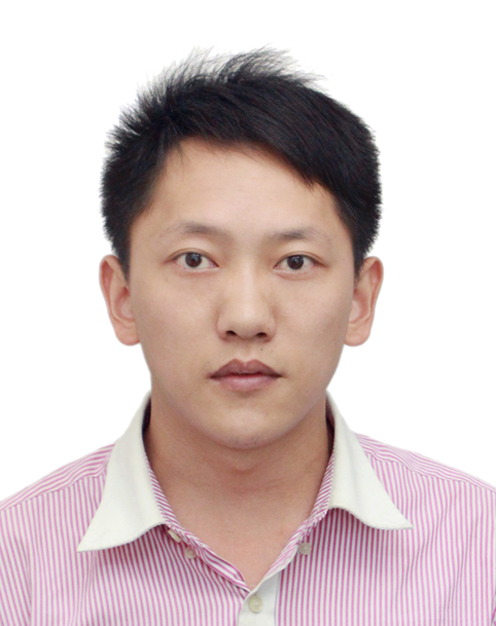}}]{Haizhou Bao}
received his B.S. degree in electronic information engineering from Huanggang Normal University in 2012, he is currently a Ph.D. student at the School of Computer Science, Wuhan University. His research interests include optimization for wireless networks, optimization of vehicular networks, data dissemination of vehicular networks, resource allocation of NR-V2X.
\end{IEEEbiography}

\begin{IEEEbiography}[{\includegraphics[width=1in,height=1.25in,clip,keepaspectratio]{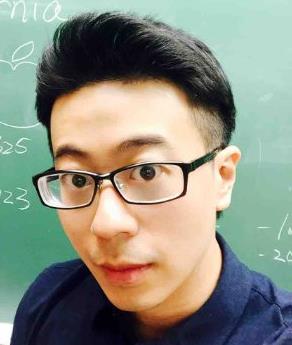}}]{Yiming Huo}
(S’08–M’18) received his B.Eng degree in information  engineering  from  Southeast  University, China,  in  2006,  and  M.Sc.  degree in System-on-Chip (SoC) from Lund University, Sweden, in 2010, and Ph.D. in electrical engineering at University of Victoria, Canada, in 2017, and he is currently a Research Associate with the same department. His recent research interests include 5G and 6G wireless systems, terahertz technology, space technology, Internet of Things, and machine learning. 

He has worked in several companies and institute including Ericsson, ST-Ericsson, Chinese Academy of Sciences, STMicroelectronics, and Apple Inc., Cupertino, CA, USA. He is a member of several IEEE societies, and also a member of the Massive MIMO Working Group of the IEEE Beyond 5G Roadmap. He was a recipient of the Best Student Paper Award of the 2016 IEEE ICUWB, the Excellent Student Paper Award of the 2014 IEEE ICSICT, and the Bronze Leaf Certificate of the 2010 IEEE PrimeAsia. He also received the ISSCC-STGA Award from the IEEE Solid-State Circuits Society (SSCS), in 2017. He has served as the Program Committee of the IEEE ICUWB 2017, the TPC of the IEEE VTC 2018/2019/2020, the IEEE ICC 2019, the Session Chair of the IEEE 5G World Forum 2018, the Publication Chair of the IEEE PACRIM 2019, the Technical Reviewer for multiple premier IEEE conferences and journals. He is an Associate Editor for the IEEE Access.
\end{IEEEbiography}

\begin{IEEEbiography}[{\includegraphics[width=1in,height=1.25in,clip,keepaspectratio]{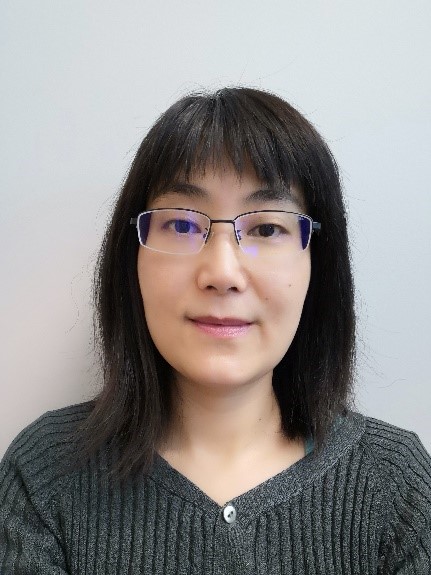}}]{Xiaodai Dong}
(S’97–M’00–SM’09) received her B.Sc. degree in Information and Control Engineering from Xi'an Jiaotong University, China in 1992, her M.Sc. degree in Electrical Engineering from National University of Singapore in 1995 and her Ph.D. degree in Electrical and Computer Engineering from Queen's University, Kingston, ON, Canada in 2000. Since January 2005 she has been with the University of Victoria, Victoria, Canada, where she is now a Professor at the Department of Electrical and Computer Engineering. She was a Canada Research Chair (Tier II) in 2005-2015. Between 2002 and 2004, she was an Assistant Professor at the Department of Electrical and Computer Engineering, University of Alberta, Edmonton, AB, Canada. From 1999 to 2002, she was with Nortel Networks, Ottawa, ON, Canada and worked on the base transceiver design of the third-generation (3G) mobile communication systems. 

Dr. Dong’s research interests include 5G, mmWave communications, radio propagation, Internet of Things, machine learning, terahertz communications, localization, wireless security, e-health, smart grid, and nano-communications. She served as an Editor for \textit{IEEE Transactions on Wireless Communications} in 2009-2014, \textit{IEEE Transactions on Communications} in 2001-2007, \textit{Journal of Communications and Networks} in 2006-2015, and is currently an Editor for \textit{IEEE Transactions on Vehicular Technology and IEEE Open Journal of the Communications Society}. 

\end{IEEEbiography}

\begin{IEEEbiography}[{\includegraphics[width=1in,height=1.25in,clip,keepaspectratio]{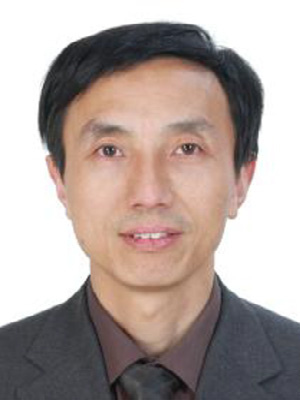}}]{Chuanhe Huang}
received his B.Sc., M.Sc., and Ph.D. degrees, all in Computer Science, from Wuhan University, Wuhan, China, in 1985, 1988, and 2002, respectively. He is currently a Professor at the School of Computer Science, Wuhan University. His research interests include computer networks, VANETs, Internet of Things, and Distributed Computing.
\end{IEEEbiography}



\ifCLASSOPTIONcaptionsoff
  \newpage
\fi

\end{document}